\newcommand{\N}{\ensuremath{\mathbb{N}} }
\newcommand{\fbins}{\ensuremath{\{0,1\}^*}}
\newcommand{\bin}{\ensuremath{\{0,1\}}}
\newcommand{\infbins}{\ensuremath{\{0,1\}^{\omega}}}
\newcommand{\PPMk}[1]{\ensuremath{\mathrm{PPM}_{#1}}}
\newcommand{\PPMun}{\ensuremath{\mathrm{PPM}^*}}
\newcommand{\LZ}{\ensuremath{\mathrm{LZ}}}
\newcommand{\thh}{\ensuremath{\textrm{th}}}
\newcommand{\occ}[2]{\ensuremath{\text{occ}(#1,#2)}}
\newcommand{\occb}[2]{\ensuremath{\text{occ$_b$}(#1,#2)}}
\theoremstyle{plain}
\newtheorem{theorem}{Theorem}[section]
\newtheorem{corollary}[theorem]{Corollary}
\newtheorem{lemma}[theorem]{Lemma}
\theoremstyle{definition}
\newtheorem{remark}[theorem]{Remark}
\newtheorem{claim}[theorem]{Claim}
\begin{document}

\title{A Normal Sequence Compressed by \PPMun \, but not by Lempel-Ziv 78}

\author{Liam Jordon\thanks{Supported by a postgraduate scholarship from the Irish Research Council.}\\
\textrm{liam.jordon@mu.ie} \\
\and 
Philippe Moser \\
\textrm{pmoser@cs.nuim.ie}}

\date{%
    Dept. of Computer Science, Maynooth University, Maynooth, Co. Kildare, Ireland.\\[2ex]%
}


\maketitle

\begin{abstract}
In this paper we compare the difference in performance of two of the Prediction by Partial Matching (PPM) family of compressors (\PPMun and the original Bounded PPM algorithm) and the Lempel-Ziv 78 (LZ) algorithm. We construct an infinite binary sequence whose worst-case compression ratio for PPM* is $0$, while Bounded PPM's and LZ's best-case compression ratios are at least $1/2$ and $1$ respectively. This sequence is an enumeration of all binary strings in order of length, i.e. all strings of length $1$ followed by all strings of length $2$ and so on. It is therefore normal, and is built using repetitions of \emph{de Bruijn} strings of increasing order. 
\end{abstract}

\textbf{Keywords:} compression algorithms, Lempel-Ziv algorithm, Prediction by Partial Matching algorithms, normality

\section{Introduction}

A \textit{normal} number in base $b$, as defined by Borel \cite{borelNormal}, is a real number whose infinite decimal expansion in that base is such that for all block lengths $n$, every string of digits in base $b$ of length $n$ occur as a substring in the decimal expansion with limiting frequency $\frac{1}{b^n}$. In this paper we restrict ourselves to examining normal binary sequences, i.e. normal numbers in base $2$.

A common question studied about normal sequences is whether or not they are compressible by certain families of compressors. Results by Schnorr and Stimm \cite{DBLP:journals/acta/SchnorrS72} and Dai, Lathrop, Lutz and Mayordomo \cite{b.lutz.finite-state.dimension} demonstrate that lossless finite-state transducers (FSTs) cannot compress normal sequences. Becher, Carton and Heiber \cite{DBLP:journals/jcss/BecherCH15} explore what happens to the compressibilty of normal sequences in various scenarios such as when the FST has access to one or more counters or a stack, and what happens when the transducer is not required to run in real-time nor be deterministic. Carton and Heiber \cite{DBLP:journals/iandc/CartonH15} show that deterministic and non-deterministic two-way FSTs cannot compress normal sequences. Among other compression algorithms, Lathrop and Strauss \cite{maliciousLZ} have shown that there exists a normal sequence such that the Lempel-Ziv 78 (LZ) algorithm can compress. 

In this paper we focus on the performance of the Prediction by Partial Matching (PPM) compression algorithm which was introduced by Cleary and Witten \cite{DBLP:journals/tcom/ClearyW84}. PPM works by building an adaptive statistical model of its input as it reads each character. The model keeps track of previously seen substrings in the input, known as \emph{contexts}, and the characters that follow them. When predicting the next character, the encoder begins by identifying the \emph{relevant} contexts currently in the model. These relevant contexts refer to suffixes of the of the already encoded part of the input that have been stored in the model. The next character is then encoded based on its frequency counts in the relevant contexts. The model is updated after each character is encoded. This involves updating the frequency counts of the seen character in the relevant contexts and, if needed, adding new contexts in the model. These prediction probabilities for each character encodes the sequence via arithmetic encoding \cite{WittenACEncoding}.

In the original PPM (Bounded PPM), prior to encoding the input, a value $k \in \N$ must be provided to the encoder which sets the maximum length of a context the model can store. Studies have gone into identifying which value for $k$ achieves the best compression. One may think the larger the $k$, the better the compression. However, increasing $k$ above $5$ does not generally improve compression \cite{DBLP:journals/tcom/ClearyW84}. Over a decade later, a new version of PPM was introduced, called PPM* \cite{PPMstar}. This version of the algorithm sets no upper bound on the length of contexts the model can keep track of. 

Inspired by Mayordomo, Moser and Perifel \cite{DBLP:journals/mst/MayordomoMP11} which compares the best-case and worse-case compression ratio of various compression algorithms on certain sequences, in this paper we construct a normal sequence $S$ and compare how it is compressed by $\PPMun$, Bounded PPM and LZ. $\PPMun$  can compress $S$ with a worst-case compression ratio of $0$. We also show that no matter what upper bound for $k$ chosen, Bounded PPM's best-case compression ratio is at least $1/2$. Also, LZ has a best-case compression ratio of $1$ on $S$, i.e. $S$ cannot be compressed by LZ.

$S$ is constructed such that it is an enumeration of all binary strings in order of length i.e. all strings of length $1$ followed by all strings of length $2$ and so on. For instance, $0100011011$ is an enumeration of all strings up to length $2$. Such sequences cannot be compressed by LZ, which in turn means they cannot be compressed by any FST \cite{b.lempel-ziv}. Thus $S$ is normal. This enumeration is achieved via repetitions of \emph{de Bruijn} strings which \PPMun \, can exploit to compress $S$.

Some proofs are omitted from the main body of the paper due to space constraints. These are all contained in the appendix provided.

\section{Preliminaries}

$\N$ denotes the set of non-negative integers. A \emph{finite binary} string is an element of $\fbins$. A \emph{binary sequence} is an element of $\infbins$. The length of a string $x$ is denoted by $|x|$. $\lambda$ denotes the empty string, i.e. the string of length $0$. For all $n \in \N$, $\bin^n$ denotes the set of binary strings of length $n$. For a string (or sequence) $x$ and $i,j, \in \N$ with $i \leq j$, $x[i..j]$ denotes the $i^{\thh}$ through $j^{\thh}$ bits of $x$ with the convention that if $j < i$ then $x[i..j] = \lambda$. For a string $x$ and string (sequence) $y$, $xy$ (occasionally denoted by $x \cdot y$) denotes the string (sequence) of $x$ concatenated with $y$. For a string $x$ and $n \in \N$, $x^n$ denotes $x$ concatenated with itself $n$  times. For strings $x,y$ and string (sequence) $z$, if $w = xyz$, we say $y$ is a substring of $w$, $x$ is a prefix of $w$ (sometimes denoted by $x \sqsubseteq w$), and if $z$ is a string, then $z$ is a suffix of $w$.   For a sequence $S$ and $n \in \N$, $S \upharpoonright n$ denotes the prefix of $S$ of length $n$, i.e. $S \upharpoonright n = S[0..n-1].$ The \emph{lexicographic} ordering of $\fbins$ is defined by saying for two strings $x,y$, $x$ is less than $y$ if either $|x| < |y|$ or else $|x| = |y|$ with $x[n..n] = 0$ and $y[n..n] = 1$ for the least $n$ such that $x[n] \neq y[n]$.

Given a sequence $S$ and a function $T: \fbins \rightarrow \fbins$, the \emph{best-case} and  \emph{worst-case compression ratios} of $T$ on $S$ are given by  
$$\rho_{T}(S) =  \liminf\limits_{n \to \infty} \frac{|T(S \upharpoonright n)|}{n} \text{ and, } R_{T}(S) =  \limsup\limits_{n \to \infty} \frac{|T(S \upharpoonright n)|}{n}$$ respectively.

Given strings $x,w$ we use the following notation to count the number of times $w$ occurs as a substring in $x$.
\begin{enumerate}
    \item The number of occurrences of $w$ as a substring of $x$ is given by
    $$\occ{w}{x} = |\{u \in \fbins : uw \sqsubseteq x \}|.$$
    
    \item The block number of occurrences of $w$ as a substring of $x$ is given by
    $$\occb{w}{x} = |\{i : x[i + i+|w|-1] = w, i \equiv 0 \mod |w| \}|$$
\end{enumerate}

A sequence $S$ is said to be \emph{normal}, as defined by Borel \cite{borelNormal} if for all $w \in \fbins$

$$\lim\limits_{n \rightarrow \infty}\frac{\occ{w}{S \upharpoonright n}}{n} = 2^{-|w|}.$$

We say that a sequence $S$ is an \emph{enumeration of all strings}, we mean that $S$ can be broken into substrings $S=S_1S_2S_3\ldots,$ such that for each $n$, $S_n$ is a concatenation of all strings of length $n$ with each string occurring once. That is, for all $w \in \bin^i, \occb{w}{S_i} = 1.$ Note that $|S_n| = n(2^n).$

\section{Description of the PPM Algorithms}
 \label{ppm section}
 Before we begin, we note that implementations of the PPM algorithm family implement what is known as the \emph{exclusion principal} to achieve better compression ratios. We ignore this in our implementation for simplicity as even without this, the sequence we later build achieves a compression ratio of $0$ via $\PPMun$.

 \subsection{Bounded PPM}
 
 In the original presentation of PPM in 1984 \cite{DBLP:journals/tcom/ClearyW84}, a bounded version is introduced. Prior to encoding the input sequence, a value $k \in \N$ must be provided to the encoder which sets the maximum \emph{context} length the model keeps track of. As such, we refer to this version as Bounded PPM and denote Bounded PPM with bound $k$ as $\PPMk{k}.$ By context, we mean previously seen substrings of the input stream contained in the model. For each context, the model records what characters have followed the context in the input stream, and the frequency each character has occurred. These frequencies are used to build \emph{prediction probabilities} that the encoder uses to encode the rest of the input stream. When reading the next bit of the input stream, the encoder examines the longest \emph{relevant} context each time and encodes the current character based on its current prediction probability in that context. By relevant context, we mean suffixes of the input stream already read by the encoder that are contained in the model. The longest relevant context available is referred to as the \emph{current} context as it is the one the model uses to first encode the next character seen. Once encoded, the model is updated to include new contexts if necessary, and to update the prediction probabilities of the relevant contexts to reflect the character that has just been read.
 
 A problem occurs if the character being encoded has never occurred previously in the current context. When this occurs, an \emph{escape} symbol (denoted by \$) is transmitted and the next shortest relevant context becomes the new current context. If the character has not been seen before even when the current context is $\lambda$, that is, the context where none of the previous bits are used to predict the next character, an escape is outputted and the character is assigned the prediction probability from the order-$(-1)$ table. By convention, this table contains all characters in the alphabet being used and assigns each character equal probability.

A common question is what probabilities are assigned to these escape symbols. This paper uses \textit{Method C} proposed by Moffat \cite{moffatMethodC}. Here, the escape symbol is given a frequency equal to the number of distinct characters predicted in the context so far. Hence in our case it will always have a count of $1$ or $2$.

For instance in Table \ref{tab:bd}, the model for the string $0100110110$ with bound $k=3$ is found. In the context $01$, the escape symbol $\$$ has count $2$ as both $0$ and $1$ have been seen, while $\$$ in $101$ has count $1$ as only $1$ has been seen. A context is said to be \emph{deterministic} if it has an escape count frequency of 1.

For example, suppose $0$ is the next character to be encoded after input stream $0100110110$ by $\PPMk{3}$, whose model is shown in Table \ref{tab:bd}. The relevant contexts are $110,10,0$ and $\lambda$. The longest relevant context is $110$. The encoder escapes to the shorter context $10$ since $0$ is not seen in context $110$ and is encoded by the prediction probability $\frac{1}{2}$. From $10$, $0$ is encoded with probability $\frac{1}{4}$. The frequency counts of $0$ will be updated in the $10$, $0$ and empty contexts. Also, $0$ will be added as a prediction to context $110$. Following this, if the next character to be encoded was another $0$ the model would start in context $100$, and since $0$ is not predicted here, it would transmit an escape symbol with probability $\frac{1}{2}$ and then examine the next longest context $00$ and proceed as necessary. If there was another bit $b$ in the input stream after this, as $000$ would be the  current suffix of the input stream but no context for $000$ exists yet as it has never been seen before, the encoder would begin in context $00$ and proceed as before, and a context for $000$ would be created predicting the character $b$ when the model updates.

\begin{table}[t]
\centering
\begin{tabular}{cccclcccclcccc}
\hline
ctxt & pred & cnt & pb &  & ctxt & pred & cnt & pb &  & ctxt & pred & cnt & pb \\ \hline

\multicolumn{4}{c}{Order $k = 3$} &  & \multicolumn{4}{c}{Order $k = 2$} &  & \multicolumn{4}{c}{Order $k = 1$} \\[1mm]
001 & 1 & 1 & $\frac{1}{2}$ &  & 00 & 1 & 1 & $\frac{1}{2}$ &  & 0 & 0 & 1 & $\frac{1}{6}$ \\[1mm]
 & \textit{\$} & 1 & $\frac{1}{2}$ &  &  & \textit{\$} & 1 & $\frac{1}{2}$ &  &  & 1 & 3 & $\frac{1}{2}$\\[1mm]
010 & 0 & 1 & $\frac{1}{2}$ &  & 01 & 0 & 1 & $\frac{1}{5}$ &  &  & \textit{\$} & 2 & $\frac{1}{3}$ \\[1mm]
 & \textit{\$} & 1 & $\frac{1}{2}$ &  &  & 1 & 2 & $\frac{2}{5}$ &  & 1 & 0 & 3 & $\frac{3}{7}$ \\[1mm]
011 & 0 & 2 & $\frac{2}{3}$ &  &  & \textit{\$} & 2 & $\frac{2}{5}$ &  &  & 1 & 2 & $\frac{2}{7}$ \\[1mm]
 & \textit{\$} & 1 & $\frac{1}{3}$ &  & 10 & 0 & 1 & $\frac{1}{4}$ &  &  & \textit{\$} & 2 & $\frac{2}{7}$ \\[1mm]
100 & 1 & 1 & $\frac{1}{2}$ &  &  & 1 & 1 & $\frac{1}{4}$ &  & \multicolumn{4}{c}{Order $k = 0$} \\[1mm]
 & \textit{\$} & 1 & $\frac{1}{2}$ &  &  & \textit{\$} & 2 & $\frac{1}{2}$ &  & $\lambda$ & 0 & 5 & $\frac{5}{12}$ \\[1mm]
101 & 1 & 1 & $\frac{1}{2}$ &  & 11 & 0 & 2 & $\frac{2}{3}$ &  &  & 1 & 5 & $\frac{5}{12}$ \\[1mm]
 & \textit{\$} & 1 & $\frac{1}{2}$ &  &  & \textit{\$} & 1 & $\frac{1}{3}$ &  &  & \textit{\$} & 2 & $\frac{1}{6}$ \\[1mm]
110 & 1 & 1 & $\frac{1}{2}$ &  &  &  &  &  &  & \multicolumn{4}{c}{Order $k = -1$} \\[1mm]
 & \textit{\$} & 1 & $\frac{1}{2}$ &  &  &  &  &  &  &  & 0 & 1 & $\frac{1}{2}$ \\[1mm]
 &  &  &  &  &  &  &  &  &  &  & 1 & 1 & $\frac{1}{2}$ \\[1mm]
\end{tabular}
\caption{$\PPMk{3}$ model for the input $0100110110$}
\label{tab:bd}
\end{table}


\subsection{$\PPMun$}

$\PPMun$ encodes its input very similarly to Bounded PPM in that it builds a model of contexts of the sequence, continuously updates the model, and encodes each character it sees based on its frequency probability of the current context. However there are some key differences. As there is no upper bound on the max context length stored in the model, instead of building a context for every substring seen in the input, a context is only extended until it is unique. Suppose that of the input stream to $\PPMun$, the prefix $x$ has been read so far. This means that for a string $w \in \fbins$, if $\occ{w}{x} \geq 2$, the context $wb$ must be built in the model for each $b$ that follows $w$ in $x$. When examining all relevant contexts to choose the be the first current context to encode the next bit, unlike Bounded PPM which chooses the longest context, $\PPMun$ chooses the shortest \emph{deterministic} context. Here, a context is said to be \emph{deterministic} if it has an escape count frequency of $1$. If no such context exists, the longest is chosen as with Bounded PPM. We also use the \emph{Method C} approach to computing escape probabilities for $\PPMun$.

The following is a full example of a model being updated. Suppose an input stream of $s = 0100110110$ has already been read. The model for this is seen in Table \ref{tab:ubd}. Say the next bit read is a $0$. The relevant contexts are the empty context, $0,10,110$ and $0110$. The shortest deterministic context is $110$. It does not predict a $0$ so an escape is transmitted with probability $\frac{1}{2}$ and then $0$ is transmitted from the context $10$ with probability $\frac{1}{4}$. The model is then updated as follows. The empty context, $0$ and $10$ all predict a $0$ so their counts are updated. $110$ and $0110$ don't predict a $0$, so it is added as a prediction. Furthermore, the substrings $00$ and $100$ are not unique in $s0$ while they were in just $s$. That is $\occ{00}{s0} \neq 1, \occ{100}{s0} \neq 1$ while $\occ{00}{s} = \occ{100}{s} = 1.$ These contexts must be extended to create new contexts $001$ and $1001$. This is because $1$ is what follows $00$ and $100$ in $s$. These contexts both predict $1$. If another $0$ is read after $s0$, since both a $0$ and $1$ now have been seen to follow $110$ and $0110$, contexts for $1100$ and $01100$ will be created both predicting a $0$, since a context has to be made for each branching path of $110$ and $0110$  ($1101$ and $01101$ already exist).

\begin{table}[t]
\centering
\begin{tabular}{cccclcccclcccc}
\hline
ctxt & pred & \multicolumn{1}{c}{cnt} & pb &  & ctxt & pred & cnt & pb &  & ctxt & pred & cnt & pb \\ \hline
\multicolumn{4}{c}{Order $k = 5$} &  & 101 & 1 & 1 & $\frac{1}{2}$ &  & \multicolumn{4}{c}{Order $k = 1$} \\[1mm]
01101 & 1 & 1 & $\frac{1}{2}$ &  &  & \textit{\$} & 1 & $\frac{1}{2}$ &  & 0 & 0 & 1 & $\frac{1}{6}$ \\[1mm]
 & \textit{\$} & 1 & $\frac{1}{2}$ &  & 110 & 1 & 1 & $\frac{1}{2}$ &  &  & 1 & 3 & $\frac{1}{2}$ \\[1mm]
\multicolumn{4}{c}{Order $k = 4$} &  &  & \textit{\$} & 1 & $\frac{1}{2}$ &  &  & \textit{\$} & 2 & $\frac{1}{3}$ \\[1mm]
0110 & 1 & 1 & $\frac{1}{2}$ &  & \multicolumn{4}{c}{Order $k = 2$} &  & 1 & 0 & 3 & $\frac{3}{7}$ \\[1mm]
 & \textit{\$} & 1 & $\frac{1}{2}$ &  & 00 & 1 & 1 & $\frac{1}{2}$ &  &  & 1 & 2 & $\frac{2}{7}$ \\[1mm]
1101 & 1 & 1 & $\frac{1}{2}$ &  &  & \textit{\$} & 1 & $\frac{1}{2}$ &  &  & \textit{\$} & 2 & $\frac{2}{7}$ \\[1mm]
 & \textit{\$} & 1 & $\frac{1}{2}$ &  & 01 & 0 & 1 & $\frac{1}{5}$ &  & \multicolumn{4}{c}{Order $k = 0$} \\[1mm]
\multicolumn{4}{c}{Order $k = 3$} &  &  & 1 & 2 & $\frac{2}{5}$ &  & $\lambda$ & 0 & 5 & $\frac{5}{12}$ \\[1mm]
010 & 0 & 1 & $\frac{1}{2}$ &  &  & \textit{\$} & 2 & $\frac{2}{5}$ &  &  & 1 & 5 & $\frac{5}{12}$ \\[1mm]
 & \textit{\$} & 1 & $\frac{1}{2}$ &  & 10 & 0 & 1 & $\frac{1}{4}$ &  &  & \textit{\$} & 2 & $\frac{1}{6}$ \\[1mm]
011 & 0 & 2 & $\frac{2}{3}$ &  &  & 1 & 1 & $\frac{1}{4}$ &  & \multicolumn{4}{c}{Order $k = -1$} \\[1mm]
 & \textit{\$} & 1 & $\frac{1}{3}$ &  &  & \textit{\$} & 2 & $\frac{1}{2}$ &  &  & 0 & 1 & $\frac{1}{2}$ \\[1mm]
100 & 1 & 1 & $\frac{1}{2}$ &  & 11 & 0 & 2 & $\frac{2}{3}$ &  &  & 1 & 1 & $\frac{1}{2}$ \\[1mm]
 & \textit{\$} & 1 & $\frac{1}{2}$ &  &  & \textit{\$} & 1 & $\frac{1}{3}$ &  &  &  &  & 
\end{tabular}
\caption{\label{tab:ubd}PPM* model for the input 0100110110}
\label{tab: unb}
\end{table}

\subsection{Arithmetic Encoding}

The final output of the PPM encoder is a real number in the interval $[0,1)$ found via arithmetic encoding \cite{Bell:1990:TC:77753,WittenACEncoding}. The arithmetic encoder begins with the interval $[0,1)$. At each stage of encoding, the interval is split into subintervals of lengths corresponding to the probabilities of the current context being examined by the model. The subinterval corresponding to the character or escape symbol transmitted is then carried forward to the next stage.

Once the final character of the input string is encoded, a real number $c$ is transmitted such that $ c \in [a,b)$, where $[a,b)$ is the final interval and $c$ can be encoded in $-\lceil \log(|b - a|) \rceil$ bits. At most 1 bit of overhead is required. With $c$ and the length of the original sequence to be encoded, the decoder can find the original sequence. 

For simplicity, we assume the encoder can calculate the endpoints of the intervals with infinite precision and waits until the end to convert the fraction to its final form at the end of the encoding. In reality, a fixed finite limit precision is used by encoders to represent the intervals and their endpoints and a process known as \emph{renormalisation} occurs to prevent the intervals becoming too small for the encoder to handle.

\section{An Analysis of $\PPMun$}
\label{PPM Analysis}

In this section we build a normal sequence $S$ such that $R_{\PPMun}(S) = 0$. $S$ will be an enumeration of all binary strings and built via a concatenations of \emph{de Bruijn} strings. This ensures it is imcompressible by the Lempel-Ziv algorithm.

\subsection{\emph{de Bruijn} Strings}

Named after Nicolaas de Bruijn for his work from 1946\cite{debruijn1946combinatorial}, for $n \in \N$, a \emph{de Bruijn} string of order $n$ is a string of length $2^n$ that when viewed cyclically, contains all binary strings of length $n$ exactly once. That is, for a \emph{de Bruijn} string $x$ of order $n$, for all $w \in \bin^n$, $\occ{w}{x \cdot x[0.. n-2]} = 1.$ For example, $00011101$ is a \emph{de Bruijn} string of order $3$.

Henceforth, we use $db(n)$ to denote the least lexicographic \emph{de Bruijn} string of order $n$. Martin provided the following algorithm to build this string in 1934 \cite{martin1934}:

\begin{enumerate}
    \item Write the string $x = 1^{n-1}$.
    \item While possible, append a bit (with $0$ taking priority over $1$) to the end of $x$ so that substrings of length $n$ occur only once in $x$.
    \item When step 2 is no longer possible \footnote{Martin proves that this occurs when $|x| = 2^n + n -1$}, remove the prefix $1^{n-1}$ from $x$. The resulting string is $db(n)$. 
\end{enumerate}

Before we proceed we make note of the following properties of $db(n).$
\begin{remark}
For $n \geq 3$, 
\begin{enumerate}
    \item $db(n)[0..2n] = 0^n10^{n-2}11$,
    \item $db(n)[2^n-n-1.. 2^n-1] = 1^n$.
    \end{enumerate}
    \label{db facts}
\end{remark}

\begin{proof}[Proof: ]

From the construction, $db(n)$ must begin with $0^n$. This is followed by a $1$, otherwise the string $0^n$ would occur twice. The next $n-2$ bits are $0$s. This must be followed by a $1$. For otherwise, if $0^n10^{n-2}$ was followed by another $0$, this would result in either $0^n$ or $0^{n-1}1$ occurring twice depending on whether the next bit was a $0$ or a $1$. $0^n10^{n-2}1$ is then followed by a $1$ as otherwise $0^{n-2}10$ occurs twice.

$db(n)$ having suffix of $1^n$ is proven by Martin when showing when his algorithm terminates \cite{martin1934}. 
\end{proof}


We use the following notation for cyclic shifts of $db(n)$. For $0 \leq i < 2^n $, let $db_i(n)$ denote a left shift of $i$ bits of $db(n)$. That is, $db_i(n) = db(n)[i ..2^n-1] \cdot  db(n)[0.. i-1]$. We write $db(n)$ instead of $db_0(n)$ when no shift occurs. $db_i^j(n)$ denotes $db_i(n)$ concatenated with itself $j$ times.

\subsection{Construction and Properties of $S$}

The infinite binary sequence $S = S_1S_2S_3\ldots$ is built such that each $S_n$ is a concatenation of all strings of length $n$ and maximises repetitions to exploit PPM*. Maximising repetitions ensures deterministic contexts are repeatedly used to predict bits in the sequence, thus resulting in compression.

For every $n \in \N$, $n$ can be written in the form $n=2^st$, where $s,t \in \N \cup \{0\}$ and $t$  is odd. We set $S_n = B_{n,0}\cdot B_{n,1}\cdots B_{n,2^s - 1}$, where $B_{n,i} = db_{i}^t(n)$. Each $B_{n,i}$ is called the $i^\thh$ \emph{block} of $S_n$. Note that if $n$ is odd then $S_n = db^n(n),$ and if $n$ is a power of $2$ then $S_n = db(n)\cdot db_1(n)\cdots db_{n-1}(n).$ To help visualise this, table \ref{S6} is provided which shows how $S_6$ is built.

\begin{table}[t]
\centering
\begin{tabular}{l}
0000001000011000101000111001001011001101001111010101110110111111 \\
0000001000011000101000111001001011001101001111010101110110111111 \\
0000001000011000101000111001001011001101001111010101110110111111 \\
0000010000110001010001110010010110011010011110101011101101111110 \\
0000010000110001010001110010010110011010011110101011101101111110 \\
0000010000110001010001110010010110011010011110101011101101111110
\end{tabular}
\caption{\label{S6}To construct $S_6$, concatenate the six rows of this table. The first three rows are $db^3(6)$ while the second three rows are $db^3_1(6)$.}
\end{table}

The following lemma states that $S$ is in fact an enumeration of all binary strings, and hence normal. This is the property which later ensures that $S$ is Lempel-Ziv imcompressible. 

\begin{lemma}
\label{Enum Proof}
For each $n \in \N$, for $w \in \bin^n$, $\occb{w}{S_n} = 1.$
\end{lemma}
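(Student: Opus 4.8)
The plan is to split the length-$n$ blocks counted by $\occb{w}{S_n}$ (the windows $S_n[jn..jn+n-1]$, $j=0,1,\dots$) according to which block $B_{n,i}$ they lie in, then identify the windows inside each $B_{n,i}$ with cyclic substrings of $db(n)$ and read off a bijection. Write $n=2^st$ with $t$ odd, so $|B_{n,i}| = t\,2^n = 2^{n-s}\cdot n$. Since each $|B_{n,i}|$ is a multiple of $n$, each $B_{n,i}$ starts at a position of $S_n$ divisible by $n$, and no block-aligned length-$n$ window straddles two blocks (the largest multiple of $n$ below a block boundary is one $n$ before it); hence the $2^n$ block-aligned windows of $S_n$ (note $|S_n| = n\,2^n$) split, for each $i\in\{0,\dots,2^s-1\}$, into the $2^{n-s}$ windows internal to $B_{n,i}$, namely those at offsets $0,n,2n,\dots,(2^{n-s}-1)n$ within $B_{n,i}$. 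As there are exactly $2^n$ windows and $|\bin^n| = 2^n$, it suffices to exhibit the family of all these windows as the image of a bijection $\mathbb{Z}/2^n\mathbb{Z}\to\bin^n$.

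Fix $i$ and recall $B_{n,i} = db_i(n)^t$. Because consecutive copies of $db_i(n)$ are identical and $n\le 2^n$, the length-$n$ window of $B_{n,i}$ at offset $jn$ equals the length-$n$ \emph{cyclic} window of $db_i(n)$ starting at position $jn \bmod 2^n$; the last window, at offset $(2^{n-s}-1)n$, ends exactly at $|B_{n,i}|$, so nothing overruns the block. A left shift by $i$ turns the cyclic window of $db_i(n)$ at position $p$ into the cyclic window of $db(n)$ at position $(p+i)\bmod 2^n$, and by the defining property of the de Bruijn string $db(n)$ the map $\phi\colon\mathbb{Z}/2^n\mathbb{Z}\to\bin^n$ sending $q$ to the length-$n$ cyclic window of $db(n)$ at position $q$ is a bijection (its $2^n$ values are precisely the $2^n$ length-$n$ substrings of $db(n)\cdot db(n)[0..n-2]$, each occurring once). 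Thus the windows internal to $B_{n,i}$ form the set $\{\phi(i + (jn \bmod 2^n)) : 0\le j < 2^{n-s}\}$.

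The crux is the number-theoretic observation that, as $j$ ranges over $\{0,1,\dots,2^{n-s}-1\}$, the residue $jn\bmod 2^n$ ranges over the multiples of $2^s$ in $[0,2^n)$, each exactly once: indeed $jn = 2^s(jt)$, and $t$ is odd hence a unit modulo $2^{n-s}$, so $j\mapsto jt\bmod 2^{n-s}$ is a permutation of $\mathbb{Z}/2^{n-s}\mathbb{Z}$ and therefore $j\mapsto 2^s(jt\bmod 2^{n-s})$ enumerates $\{0,2^s,2\cdot 2^s,\dots,(2^{n-s}-1)2^s\}$. Consequently the windows internal to $B_{n,i}$ are exactly $\{\phi(q) : q\equiv i \pmod{2^s}\}$. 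Since the residue classes $\{q\equiv i\pmod{2^s}\}$ for $i=0,\dots,2^s-1$ partition $\mathbb{Z}/2^n\mathbb{Z}$, the totality of block-aligned windows of $S_n$ is $\{\phi(q) : q\in\mathbb{Z}/2^n\mathbb{Z}\}$ with each value hit once, and since $\phi$ is a bijection we conclude $\occb{w}{S_n} = 1$ for every $w\in\bin^n$.

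I expect the main obstacle to be bookkeeping the alignment correctly: the key realization is that the block-aligned windows align with the block boundaries $B_{n,i}$ (via $|B_{n,i}| = 2^{n-s}n$) but \emph{not} with the copies of $db_i(n)$ inside a block (since $n\nmid 2^n$ in general), which is exactly what forces the cyclic-substring viewpoint and the $\bmod\,2^n$ arithmetic; once that is in place, the only substantive content is the unit-mod-$2^{n-s}$ step, and the special cases "$n$ odd" ($s=0$, $S_n = db^n(n)$) and "$n$ a power of $2$" ($t=1$, $B_{n,i} = db_i(n)$) are just the extremes of this argument.
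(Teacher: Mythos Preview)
Your proof is correct and follows essentially the same approach as the paper: the paper phrases it in group-theoretic language (the cyclic group $C_{2^n}$, the subgroup $\langle x^n\rangle$ of order $2^{n-s}$, and its cosets) while you phrase it in number-theoretic language (the additive group $\mathbb{Z}/2^n\mathbb{Z}$, the multiples of $2^s$, and the residue classes mod $2^s$), with the same bijection $\phi=f$ from positions to cyclic length-$n$ substrings of $db(n)$. If anything, your version is more explicit about the alignment bookkeeping (why no block-aligned window straddles two $B_{n,i}$'s, why windows inside a block are cyclic substrings), which the paper leaves implicit.
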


\begin{proof}[Proof: ]

Consider the cyclic group of order $2^n, C_{2^n} = \langle x\, | x^{2^n} = e\rangle$, where $e=x^0$ is the identity element and $x$ is the generator of the group. There exists a bijective mapping  $f:C_{2^n} \rightarrow \{0,1\}^n$ such that for $0 \leq a < 2^n$, $x^a$ is mapped to the substring of $db(n)$ of length $n$ beginning in position $a$. That is, $f(e) = db(n)[0.. n-1], f(x) = db(n)[1 .. n], \ldots ,f(x^{2^n - 1}) = db(n)[2^n - 1]\cdot db(n)[0 .. n-2].$

Let $s,t \in \N \cup \{0\}$ such that $t$  is odd and $n=2^st$. Consider the subgroup $\langle x^n\rangle$ of $C_{2^n}$. From group theory it follows that $$|\langle x^n\rangle | = \frac{2^n}{gcd(n,2^n)} = 2^{2^st - s} = 2^{n - s}.$$ So $$\langle x^n\rangle = \bigcup\limits_{i=0}^{2^{n-s}-1}\{x^{in \mod 2^n}\} = \{e,x^n,x^{2n},\ldots ,x^{(2^{n-s}-1)n \mod 2^n}\}.$$ Concatenating the result of applying $f$ to each element of $\langle x^n\rangle $ beginning with $e$ in the natural order gives the string $$\sigma = f(e)\cdot f(x^n)\cdot f(x^{2n})\cdots f(x^{(2^{n-s}-1)n \mod 2^n}).$$ $\sigma$ can be thought of as beginning with the string $0^n$, cycling through $db(n)$ in blocks of size $n$ until the block $1^n$ is seen. As $\frac{2^{n - s}n}{2^n} = t$, we have that $ \sigma = db^t(n) = B_0$.

As $\frac{|C_{2^n}|}{|\langle x^n\rangle |} = 2^s$, there are $2^s$ cosets of $\langle x^n\rangle$ in $C_{2^n}$. As each coset is disjoint, each represents a different set of $2^{n - s}$ strings of $\bin^n$. Specifically each coset represents $B_i = db_{i}^t(n)$. Therefore, for each $y \in \bin^n$, for some $i \in \{0,\ldots ,2^s-1\}$, $\occb{y}{B_i} = 1$ and $\occb{y}{B_j} = 0$ for each $j\neq i$. Thus $S_n$ is an enumeration of $\{0,1\}^n$.
 
\end{proof}

We proceed by examining some basic properties of each $S_n$ section of $S$ for $n$ large. Henceforth, we write $S \upharpoonright S_n$ to denote $S_1\cdots S_n$.

Suppose the encoder has already processed $S \upharpoonright S_{n-1}$, so the next bit to be processed is the first bit of $S_n$. While the encoder's model may contain contexts of length $n$ after processing $S \upharpoonright S_{n-1}$, the following lemma shows it will contain all possible contexts of length $n$ after reading the first $2^n + n$ bits of $S_n$. The idea is that in the first $2^n + n - 1$ bits of $S_n$, for each $x \in \bin ^{n-1}$, $x$ occurs at least twice, and $x0$ and $x1$ occur once. Hence a context for each branching path of $x$ must be created, i.e. contexts for $x0$ and $x1$.

\begin{lemma}
\label{context building}
Let $n\geq 2$ and suppose the encoder has already processed $S \upharpoonright S_{n-1}$. The encoder's model will contain contexts for all $w \in \bin^n$ once it has processed the $S[0..(2^n+n-1)]$, i.e. the first $2^n + n$ bits of $S_n$.
\end{lemma}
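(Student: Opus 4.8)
The plan is to show that within the first $2^n+n-1$ bits of $S_n$, every string $w\in\bin^n$ appears as a substring, and moreover appears in a position that forces $\PPMun$ to build a context for it; the extra $+1$ in the $2^n+n$ bound accounts for the fact that the context for the final such $w$ is only created when the model updates after reading the bit that follows the second occurrence of its length-$(n-1)$ prefix. First I would recall the $\PPMun$ context-creation rule stated in Section~\ref{ppm section}: after reading prefix $x$ of the input stream, the context $wb$ is present in the model whenever $\occ{w}{x}\ge 2$ and $wb\sqsubseteq x$. So it suffices to exhibit, inside $S[0..(2^n+n-1)]$, for each $x\in\bin^{n-1}$, at least two occurrences of $x$ together with both one-bit extensions $x0$ and $x1$ occurring as substrings; then contexts for all of $\bin^n$ are forced.

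The key structural input is that $S_n$ begins with the block $B_{n,0}=db^t(n)$ where $n=2^st$ with $t$ odd, and $db^t(n)$ in particular has $db(n)$ as a prefix (when $t=1$, $S_n=db^n(n)$ still starts with $db(n)$). A single copy of $db(n)$ has length $2^n$, and by the de Bruijn cyclic property, reading $db(n)\cdot db(n)[0..n-2]$ — which is a prefix of $S_n$ since $S_n$ continues with another copy of $db(n)$ (as $t\ge 1$ and the first block is $db^t(n)$, or if $t=1$ then $S_{n}$ ends and $S_{n+1}$... — here I must be slightly careful and instead note $2^n+n-1 \le |db^t(n)|$ only when $t\ge 2$; when $t=1$, $|S_n|=n2^n \ge 2^n+n-1$ anyway, so the prefix $S[0..(2^n+n-2)]$ still lies within $S_n=db^n(n)$ for $n\ge 2$). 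So $S[0..(2^n+n-2)]$ contains $db(n)$ followed by its first $n-1$ bits. By the de Bruijn property every $w\in\bin^n$ occurs exactly once in $db(n)\cdot db(n)[0..n-2]$, hence every $x\in\bin^{n-1}$ occurs, and since each $x\in\bin^{n-1}$ has both extensions $x0,x1$ among the length-$n$ windows of this wrapped string (because $x$ appears in two cyclically adjacent length-$n$ windows... — actually because $x0$ and $x1$ are distinct elements of $\bin^n$ each of which occurs once), both $x0$ and $x1$ appear, and $x$ itself therefore occurs at least twice. This is exactly the hypothesis needed: $\occ{x}{S[0..(2^n+n-2)]}\ge 2$ and $x0,x1\sqsubseteq S[0..(2^n+n-2)]$ for every $x\in\bin^{n-1}$.

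From here the argument finishes mechanically: fix $w\in\bin^n$, write $w=xb$ with $x\in\bin^{n-1}$, $b\in\bin$. The last occurrence of $w$ as a window in $db(n)\cdot db(n)[0..n-2]$ ends at some position $p\le 2^n+n-2$; by the time the encoder has read $S[0..p]$, it has already seen $x$ at least twice (both the occurrence inside this $w$ and at least one other, since $x$ occurs $\ge 2$ times total and the two length-$n$ windows containing $x$ are distinct as noted), so the update rule creates context $w=xb$ at the latest when the bit at position $p$ is processed, i.e. within $S[0..(2^n+n-1)]$. Reading $2^n+n$ bits (positions $0$ through $2^n+n-1$) therefore suffices. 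I would also dispatch the trivial small-$n$ / boundary cases ($n=2$, and checking $2^n+n-1<|S_n|$ or handling the spillover into $S_{n+1}$ uniformly) in a sentence.

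The main obstacle I anticipate is the off-by-one bookkeeping between "a substring occurs in the read prefix" and "the corresponding context has been created in the model" — the context $xb$ is created only after the model update following the read of the bit that makes $\occ{x}{\cdot}$ reach $2$ with the relevant branch present, so I need to be careful that the claimed bound is $2^n+n$ bits and not $2^n+n-1$, and that the de Bruijn wrap-around gives genuinely two occurrences of each length-$(n-1)$ string within the stated prefix (this uses $n\ge 2$ so that $n-1\ge 1$ and the wrapped tail $db(n)[0..n-2]$ is nonempty, supplying the windows that wrap the boundary). The de Bruijn counting itself is immediate from Remark~\ref{db facts} and the definition, so no real calculation is needed there.
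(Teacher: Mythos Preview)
Your approach is essentially the paper's: look at the first $2^n+n-1$ bits of $S_n$, observe that every $w\in\bin^n$ occurs there exactly once by the de~Bruijn property, deduce that every $v\in\bin^{n-1}$ occurs at least twice with both extensions $v0,v1$ present, and conclude that all length-$n$ contexts are forced, with one extra bit needed for the final one.

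Two places where the paper is tighter than your write-up. First, rather than arguing that $S_n$ continues with another copy of $db(n)$ (which fails when $t=1$: then $S_n=db(n)\cdot db_1(n)\cdots$, not $db^n(n)$ as you wrote), the paper simply invokes Remark~\ref{db facts} to assert $S_n[0..2^n+n-2]=db(n)\cdot 0^{n-1}$; this holds uniformly since whether the next block is $db(n)$ or $db_1(n)$, its first $n-1$ bits are $0^{n-1}$. Second, your account of the ``$+1$'' is not quite right: the extra bit is not ``the bit following the second occurrence of the length-$(n-1)$ prefix'' but the bit following the last length-$n$ window $10^{n-1}$ itself, because (as the paper says) the model cannot build a context until it can say what that context predicts. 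With those two fixes your argument matches the paper's.
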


\begin{proof}[Proof: ][Lemma \ref{context building}]

Consider $x = S_n[0 .. 2^n + (n-2)] = db(n)\cdot 0^{n-1}$ (by Remark \ref{db facts}). By the definition of \emph{de Bruijn} strings, for all $w \in \bin^n, \occ{w}{x} = 1$. Hence, for all $v \in \bin^{n-1},\,  \occ{v0}{x} = \occ{v1}{x} = 1$. As $\occ{v}{x} \geq 2$ (as $\occ{0^{n-1}}{x} = 3$), a context for $v$ would have been created, and as $v$ is not unique in $x$, contexts for each its branching paths have to be created, namely $v0$ and $v1$. However, one more bit is required to finish building the context in the case where $v0 = 10^{n-1}$ (the last $n$ bits of $x$) as the model cannot build a context until it can say what it predicts. Hence $|x| + 1 = 2^n + n$ bits are needed in total.
 
\end{proof}

%
%
%
%
%
%
%
%
%

\subsection{The \emph{Bad Zone}}
\label{bad section}

For each $S_n$, its first $2^n + 2n$ bits are referred to as the \emph{bad zone}. Here we make no assumption about the contexts being used and assume worst possible compression. The hope is that after the  first $2^n + n$ bits of $S_n$ are encoded, either the contexts used to predict $S_n[0 .. 2^n + n -1]$ will have been deterministic and will continue to correctly predict the remaining bits of $S_n$, or that new deterministic contexts will have been created that correctly predict the remaining bits of $S_n$. Unfortunately this may not always occur in the succeeding $n$ bits. This commonly occurs if the original contexts used straddle two $S_i$ sections. 

For instance, consider $S_7[0..135] = db(7)\cdot0^71$ (by Remark \ref{db facts}). The final $1$ will be predicted by a context of length at least $7$ by Lemma \ref{context building}. $0^7$ is the context of length $7$ that may be used. However, $0^7$ is not a deterministic context. Since $S_4$ ends with $10^3$ and $S_5$ begins with $0^51$, this results in the substring $10^81$ straddling two sections. We have that $\occ{0^7}{10^81} =2$ with $\occ{0^70}{10^81} =1$ and $\occ{0^71}{10^81} =1$. Hence $10^7$ is a context of length $8$ that may be used. We know it exists as $\occ{10^6}{S \upharpoonright S_6 \cdot S_7[0..135]} = 2$. It occurs once in the straddle of $S_4$ and $S_5$ ($ 10^3 0^51$), and again in the straddle of $S_6$ and $S_7$ ($10 0^7 1$), both times followed by a $0$. It does not appear anywhere else (as $0^7$ cannot occur anywhere else) and so it deterministically predicts a $0$. However the bit currently being predicted is a $1$ and so an escape is required.

The following lemma puts an upper bound on how many bits are required to encode any singular bit occurring in an $S_n$ zone. The proof requires a counting argument examining how many times a context of length $n$ and $n-1$ can occur in the prefix $S_1S_2 \ldots S_n$ of $S$.

\begin{lemma}
\label{bad bit comp}
For almost every $n$, if $S \upharpoonright S_{n-1}$ has already been read, each bit in $S_n$ contributes at most $\log(n^5)$ bits to the encoding of $S$.
\end{lemma}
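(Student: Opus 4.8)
The plan is to bound, for a single bit position inside $S_n$, the number of PPM$^*$ escape symbols that can be emitted before the bit is finally encoded, and to bound the bit-cost of each escape plus the final encoding step. A single character is encoded by descending through a chain of relevant contexts, emitting one escape per failed context, until it reaches the order-$(-1)$ table where it is charged at most $-\log(1/2) = 1$ bit. So the total cost of one bit is $1$ (for the final step) plus $\sum_i -\log p_i$, where $p_i$ is the escape probability in the $i$-th context visited. Under Method C the escape count in any context is $1$ or $2$ and the total count in a context equals (number of occurrences of that context so far as a proper-prefix of a length-extended substring) plus its escape count; hence the escape probability in a context $c$ is at least $\tfrac{1}{\text{occ}(c, S\upharpoonright S_n) + 2}$, roughly. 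Thus each escape contributes at most $\log(\text{occ}(c,S\upharpoonright S_n)+2)$ bits, and I need two things: (i) a bound on how many contexts are in the descending chain for one bit, and (ii) a bound on $\text{occ}(c, S\upharpoonright S_n)$ for the relevant contexts $c$.

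For (i): the chain of relevant contexts for a bit being read at position $p$ inside $S_n$ consists of suffixes of the already-read prefix that are stored as contexts. By the construction of PPM$^*$, a context $w$ is only stored once $\text{occ}(w, \text{read prefix}) \geq 2$, and $\PPMun$ starts the descent at the \emph{shortest deterministic} context; in the worst case (no deterministic context) it starts at the longest relevant context and descends one length at a time. The key structural fact I would extract from Lemma~\ref{context building} and Remark~\ref{db facts} is that inside $S_n$ the longest relevant context that can appear has length $O(n)$ — contexts longer than about $2n$ would require a length-$2n$ substring repeating inside $S\upharpoonright S_n$, and the near-disjointness of the de Bruijn blocks (Lemma~\ref{Enum Proof}) forbids this except for substrings straddling the boundaries between $S_i$ sections, of which there are few and which are themselves short ($1^n$ suffixes and $0^n1$-type prefixes by Remark~\ref{db facts}). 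So the descending chain has length $O(n)$, i.e. $O(n)$ escapes.

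For (ii): I need that each context $c$ of length between (say) $n-1$ and $O(n)$ satisfies $\text{occ}(c, S\upharpoonright S_n) \leq \text{poly}(n)$ — indeed something like $O(n^2)$ or $O(n^3)$. A context of length $\ell$ contains a length-$(n-1)$ or length-$n$ window; by Lemma~\ref{Enum Proof} each string of length $n$ appears with block-frequency exactly $1$ in $S_n$, so it appears $O(n)$ times as an arbitrary (non-block-aligned) substring of $S_n$, and with multiplicity at most $|S_i|/\,$something in each earlier $S_i$; summing over $i \leq n$ and accounting for the de Bruijn repetitions ($t$ copies of $db_i$, with $t \le n$) and block-boundary straddles gives a polynomial bound. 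Combining: one bit costs at most $1 + (\text{number of escapes}) \cdot \log(\max \text{occ} + 2) = 1 + O(n)\cdot \log(\text{poly}(n)) = O(n\log n)$, which is comfortably below $\log(n^5) = 5\log n$ \emph{only if} the escape count is actually $O(\log n / \log n)= O(1)$... so I must be more careful: the bound $\log(n^5)$ forces the number of escapes times $\log(\text{occ})$ to be at most $5\log n$, meaning I actually need the descending chain to have length $O(1)$ on average, or the occ-values to be $O(1)$, for all but finitely many of the relevant contexts. The honest statement is: at most a bounded number of contexts in the chain have large ($\omega(1)$) occurrence counts — those are the short contexts of length $\Theta(1)$ near the bottom, each costing $O(\log n)$ — while the $\Theta(n)$ many contexts of length near $n$ each have occurrence count $O(1)$ (by Lemma~\ref{Enum Proof}, a length-$\geq n-1$ string occurs $O(1)$ times counting all of $S\upharpoonright S_n$ with the de Bruijn and straddle contributions), hence cost $O(1)$ each but there are $O(n)$ of them... that still gives $O(n)$, not $5\log n$.

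So the real point — and the main obstacle — is that the chain of \emph{escapes actually emitted} for a worst-case bit in $S_n$ is short: once $\PPMun$ finds the shortest deterministic context it either succeeds immediately or escapes at most a handful of times. I would argue that for a bit at position $p$ inside $S_n$ (outside the bad zone, so $p \geq 2^n + 2n$), the de Bruijn structure means the relevant context of length $\approx n$ is deterministic and either predicts the bit correctly (cost $\leq 1$) or escapes a constant number of times before hitting a context whose occurrence count is bounded by $O(n^4)$ — that straddle analysis in the text (the $10^7$ example) is the model case, where exactly one escape happens and then a deterministic context of occurrence count $2$ finishes the job. Formalizing "at most $O(1)$ escapes, each of cost $\leq \log(n^4)$, plus a final cost $\leq \log(n^4)$" would give the $\log(n^5)$ bound with room to spare. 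The hard part will be proving the uniform $O(1)$ bound on the number of escapes for \emph{every} bit in $S_n$ for almost every $n$ — this requires a careful case analysis of where the relevant length-$\approx n$ contexts can recur (within one de Bruijn block, across the $t$ repeated copies, or across an $S_i/S_{i+1}$ straddle), and showing that in each case at most a constant number of those contexts are non-deterministic, using Remark~\ref{db facts} to control exactly which substrings straddle section boundaries ($1^n$ as suffix, $0^n10^{n-2}11$ as prefix) and hence which length-$n$ contexts are "polluted" by a straddle. I expect the counting bound on $\text{occ}(c, S\upharpoonright S_n)$ for a length-$(n-1)$ or length-$n$ context $c$ — showing it is at most $n^4$ or so, by summing $\text{occ}(c, S_i)$ over $i < n$ (each $O(i/n \cdot 2^{i-n})$-ish, negligible for $i<n$, plus $O(1)$ straddle terms) and adding the $\Theta(n)$ contribution from $S_n$ itself — to be the routine-but-essential computational core, while the escape-count bound is the genuine obstacle.
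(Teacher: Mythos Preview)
You circle the right neighborhood but miss the one structural fact that makes the lemma immediate: in the binary alphabet, \emph{at most one escape is ever emitted per bit}. Recall the PPM$^*$ starting rule. If no deterministic relevant context exists, the encoder starts at the longest relevant context, which is then non-deterministic; but over $\{0,1\}$ a non-deterministic context has by definition already seen both characters, so the bit is encoded there with no escape at all. If some deterministic relevant context does exist, the encoder starts at the \emph{shortest} one, call it $x$; if $x$ mispredicts, one escape is emitted and the encoder drops to the context $x[1..|x|-1]$, which by minimality of $x$ is non-deterministic, hence (binary alphabet again) already contains the bit being encoded. Either way: zero or one escape, then a successful encoding. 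Your worries about an $O(n)$-long escape chain, and the proposed ``careful case analysis of where the relevant length-$\approx n$ contexts can recur'' to force an $O(1)$ bound, are unnecessary --- the $O(1)$ is actually $\leq 1$ and follows instantly from the algorithm's definition plus the fact that the alphabet has only two symbols.

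The occurrence bound is also simpler than your sketch. By Lemma~\ref{context building} (applied at level $n-1$), every context of length $\leq n-2$ has seen both bits once $S\upharpoonright S_{n-1}$ has been read, so the shortest deterministic context $x$ satisfies $|x| \geq n-1$. A string of length $\geq n-1$ occurs at most $j$ times in $S_j$ for each $j \leq n-1$ (at most once per copy of the de Bruijn string) and at most $2n$ times in $S_n$, giving $\occ{x}{S\upharpoonright S_n} \leq \sum_{j=1}^n j \leq n^2$. Hence the escape from $x$ costs at most $\log(n^2+1)$ bits. The fallback context $x[1..|x|-1]$ has length $\geq n-2$, and the analogous count (now $\leq j$ for $j\leq n-2$, $\leq 2(n-1)$ in $S_{n-1}$, $\leq 4n$ in $S_n$) is again $\leq n^2$ for large $n$, so encoding $b$ there costs at most $\log(n^2+2)$ bits. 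Summing, $\log(n^2+1)+\log(n^2+2) \leq \log(n^5)$, which is the claimed bound.
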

\begin{proof}[Proof: ]
For a fixed $n$, let $b$ be the current bit of $S_n$ being encoded. Let $x$ be the context used to predict $b$ by the encoder. Then $|x| \geq n-1$ as all contexts of length $n-2$ and below are non-deterministic in $S_n$ as seen in Lemma \ref{context building}. In the worst case scenario, $x$ will be deterministic, but will not predict $b$ correctly and thus will transmit an escape. In the worst case for $j \leq n-1$, $\occ{x}{S_j} \leq j$, i.e. once for every instance of $db(j)$ in $S_j$,  and $\occ{x}{S_n} = 2n$. Thus, we can bound the maximum possible number of occurrences by $$\sum_{j=1}^{n} j \leq n^2.$$ This results in an escape being transmitted in at most $-\log(\frac{1}{n^2 + 1})$ bits. 

The $b$ will then be transmitted by the non-deterministic context $x[1 .. |x|-1]$, of length at least $n-2$ (since $x$ is originally chosen as the shortest deterministic relevant context). Using the same logic, this context will have appeared at most $j$ times in $S_j$ for $j \leq n-2$, and $2(n-1)$ times in $S_{n-1}$ and $4n$ in $S_n$. Thus, we can bound the maximum possible number of occurrences by $$(\sum_{j=1}^{n-2} j) + 2(n-1) + 4n \leq n^2$$ for $n$ large. Hence, $b$ will be transmitted in at most $-\log(\frac{1}{n^2 + 2})$ bits. As such, $b$ contributes at  most $$\log(n^2 + 2) + \log(n^2 + 1) \leq \log(n^5)$$ bits to the encoding for $n$ large.
 
\end{proof}

The above Lemma \ref{bad bit comp} and knowing the size of the \emph{bad zone} allows us to bound the number of bits contributed by the bad zone of $S_n$.
\begin{corollary}
\label{bad zone comp}
For almost every $n$, if $S \upharpoonright S_{n-1}$ has already been read, the \emph{bad zone} of $S_n$ contributes at most $(2^n + 2n)\log(n^5)$ bits to the encoding of $S$.
\end{corollary}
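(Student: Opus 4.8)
The plan is to obtain this bound by a direct summation of the per-bit estimate from Lemma~\ref{bad bit comp}. Recall that, by definition, the \emph{bad zone} of $S_n$ consists of the first $2^n + 2n$ bits of $S_n$. Each of these bits is in particular a bit of $S_n$, so, under the hypothesis that $S \upharpoonright S_{n-1}$ has already been read --- which is precisely the hypothesis of Lemma~\ref{bad bit comp} --- for almost every $n$ each such bit contributes at most $\log(n^5)$ bits to the encoding of $S$. The corollary then follows by adding up these contributions over the $2^n + 2n$ bits of the bad zone.

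What remains is to justify that these per-bit contributions genuinely add. This is immediate from the description of arithmetic encoding in Section~\ref{ppm section}: the final code length is $-\lceil \log(|b-a|) \rceil$, where $[a,b)$ is the interval obtained by successively restricting to the subinterval whose length is the prediction (or escape) probability of each symbol processed, plus at most one bit of global overhead. Hence the number of bits attributable to any contiguous block of processed symbols is the sum of the $-\log p$ values associated with those symbols, and summing the bound $\log(n^5)$ over the $2^n + 2n$ bits of the bad zone of $S_n$ yields a total contribution of at most $(2^n + 2n)\log(n^5)$ bits. The single bit of overhead for the entire sequence is negligible and is not charged to any particular $S_n$.

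There is no real obstacle here: the corollary is a bookkeeping consequence of Lemma~\ref{bad bit comp}, and both the ``almost every $n$'' qualifier and the reading hypothesis are inherited verbatim from that lemma. The only mild point of care is that the per-bit estimate is applied solely to the $2^n + 2n$ bits lying in the bad zone, while the remaining bits of $S_n$ (which are handled far more favourably via deterministic contexts) are dealt with separately in the subsequent analysis.
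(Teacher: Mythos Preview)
Your proposal is correct and matches the paper's approach exactly: the paper does not even write out a proof for this corollary, merely noting that it follows from Lemma~\ref{bad bit comp} together with the fact that the bad zone has $2^n + 2n$ bits. Your added remark about additivity of the $-\log p$ contributions under arithmetic encoding is a reasonable elaboration, but nothing beyond the simple summation is needed.
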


\subsection{Main Result}
\label{main sec}

In this section we prove our main result that $$R_{\PPMun}(S)= 0.$$ This compression is achieved from the repetition of the \emph{de Bruijn} strings which lead to repeated use of deterministic contexts. When deterministic contexts are used, correct predictions are performed with probability $\frac{k}{k+1}$, for some $k \in \N$. Note that as $k$ increases, the number of bits contributed to the encoding $(-\log(\frac{k}{k+1}))$ approaches $0$.

The following shows that for $n$ large, whenever $n$ is odd or $n= 2^j$ for some $j$, the bits of $S_n$ not in the \emph{bad zone} will be predicted by deterministic contexts.

\begin{lemma}
For $n$ large, where $n$ is odd or $n = 2^j$ for some $j$, all bits not in the \emph{bad zone} of $S_n$ are correctly predicted by deterministic contexts.
\label{Out bad Zone}
\end{lemma}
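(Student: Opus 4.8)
\emph{Proof plan.} The plan is to fix a bit $b=S_n[p]$ past the bad zone, i.e.\ with $p\ge 2^n+2n$, and to exhibit a deterministic relevant context that $\PPMun$ uses and that predicts $b$ correctly. First a reduction. By the counting in the proof of Lemma~\ref{context building}, already within the first $2^n+n-1$ bits of $S_n$ every string of length $\le n-1$ occurs in $S_n$ followed by both a $0$ and a $1$, so once the bad zone has been processed every relevant context of length $\le n-1$ is non-deterministic. Two elementary facts about the model will be used: a deterministic context predicts exactly the unique bit that has followed it; and if a context $w$ is deterministic with prediction $b$ and $w'$ is a deterministic relevant context that is a suffix of $w$, then $w'$ also predicts $b$ (a previous occurrence of $w$ exists, since $w$ has been seen followed by $b$, and it carries an occurrence of $w'$ followed by $b$, so the unique prediction of $w'$ must be $b$). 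Since $\PPMun$ encodes $b$ from the shortest deterministic relevant context, it therefore suffices to produce \emph{some} context of length $\ge n$ that is present in the model, deterministic, and predicts $b$.

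For the candidate I use periodicity. When $n$ is odd, $S_n=db(n)^n$ has period $2^n$; when $n=2^j$, $S_n$ is the concatenation of the $n$ cyclic shifts of $db(n)$, so each maximal block of $S_n$ is a window of the periodic de Bruijn cycle of order $n$ and consecutive blocks differ by one position of phase. Hence for every $\ell$ with $n\le\ell\le 2^n$ the suffix $w:=S_n[p-\ell..p-1]$ is a window of that cycle, and its first $n$ bits fix its phase (the de Bruijn property), so every occurrence of $w$ inside $S_n$ that avoids a block-junction is followed by the same bit, namely $b$. Moreover, as $p$ lies well past two full periods, the length-$(\ell-1)$ prefix of $w$ has already recurred, so $\PPMun$ has built the context $w$; I take $\ell$ minimal with $w$ deterministic, which the next step bounds by $\Theta(n)$, so $w$ is a genuine relevant context.

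The heart of the argument is to rule out, for this $\ell=\Theta(n)$, any occurrence of $w$ in the already-read prefix $S_1\cdots S_{n-1}\cdot S_n[0..p-1]$ that is followed by a bit $\ne b$. Occurrences meeting a block-junction of $S_n$ or the $S_{n-1}\mid S_n$ boundary lie in $O(n)$-length windows around at most $n$ places, and a direct computation from Remark~\ref{db facts} shows that for $\ell$ linear in $n$ none of them is a copy of $w$ followed by an out-of-phase bit. For occurrences inside some $S_m$ with $m<n$, one uses the structure of $db(m)$ from Remark~\ref{db facts} — its prefix $0^m1\,0^{m-2}11$, its suffix $1^m$, and the fact that $S_m$ is, away from single-bit deletions at its block-junctions, a window of the period-$2^m$ cycle — to show that once $\ell$ exceeds a fixed constant times $n$ a length-$\ell$ window of the order-$n$ cycle cannot occur in $S_m$ at all (in particular every maximal $0$- or $1$-run in $S_m$ has length $\le m<n$, which already disposes of the contexts just before the bits immediately after the $0^n$- and $1^n$-runs of $S_n$, and the general case compares the incompatible local de Bruijn constraints carried by a long window of the two cycles). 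Taking $\ell$ as large as these estimates demand, $w$ has no occurrence outside $S_n$, every occurrence of $w$ is followed by $b$, so $w$ is deterministic and predicts $b$, and the reduction of the first paragraph finishes the proof; the $n=2^j$ case is identical, with the $n-1$ internal block-junctions of $S_n$ absorbed into the first item. The step I expect to fight hardest with is exactly this last one: pinning down the explicit $O(n)$ threshold past which a context of $S_n$ provably cannot reappear, with a different continuation, inside an earlier section $S_m$, and seeing why the hypotheses ``$n$ odd'' and ``$n=2^j$'' — which pin down the block decompositions of both $S_n$ and $S_{n-1}$ — are precisely what keep that threshold linear in $n$.
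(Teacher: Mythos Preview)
Your reduction in the first paragraph is exactly right and matches the paper: after the bad zone every context of length $\le n-1$ is non-deterministic, so it suffices to exhibit at each position \emph{some} relevant deterministic context of length $\ge n$ predicting the correct bit, since the shortest deterministic relevant context is a suffix of it and must share its prediction.

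Where you diverge from the paper is in the existence step, and this is where the gap lies. You try to manufacture, for each position $p$ separately, a context $w=S_n[p-\ell..p-1]$ with $\ell=\Theta(n)$, and then rule out all occurrences of $w$ in $S_1\cdots S_{n-1}$ (and at block junctions) with a different follow-up. You correctly flag this as the step you ``expect to fight hardest with'', and indeed as written it is not done: the sketched tools (maximal $0$- and $1$-runs, ``incompatible local de Bruijn constraints'') handle only a few special positions, whereas a generic length-$\Theta(n)$ window of $db(n)$ has no long run and no obvious obstruction to reappearing inside $S_{n-1}$, which after all contains every string of length $n-1$ many times.

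The paper sidesteps this entirely with an \emph{anchor} trick. It does not argue position by position. Instead it looks only at the \emph{first} bit past the bad zone, observes that the length-$(n{+}1)$ suffix there is the specific string $010^{\,n-2}1$ (from Remark~\ref{db facts}), and proves in two lines that $010^{\,n-2}1$ does not occur anywhere in $S_1\cdots S_{n-1}$: the substring $10^{\,n-2}$ can only sit inside $S_{n-1}$, inside $S_{n-2}$, or across a straddle, and in each case it is preceded by a $1$, never by a $0$. Hence $010^{\,n-2}1$ is deterministic. Now for \emph{every} later position $p$, the suffix of the already-read input reaching back to this anchor is an extension $010^{\,n-2}1\cdot y$; since $010^{\,n-2}1$ never occurs outside $S_n$, neither does this extension, and the periodicity of $S_n$ (for $n$ odd or $n=2^j$) forces its unique follow-up to be the correct bit. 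The shortest deterministic relevant context is then a suffix of this extension, and by your own reduction it predicts correctly.

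So the missing idea is: don't search for a fresh $\Theta(n)$-length context at every $p$; find \emph{one} concrete context $010^{\,n-2}1$ at the boundary of the bad zone whose non-occurrence in $S\upharpoonright S_{n-1}$ is easy to verify directly, and let every later context inherit determinism by being an extension of it.
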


\begin{proof}[Proof: ]
For $n$-odd, the 
$2^n + 2n + 1^\thh$ bit in $S_n$ will always be a $1$ (if $n$ is a power of $2$, a similar argument holds but we look at the $2^n + 2n^\thh$ bit). This is because $db(n)\cdot db(n)[0.. 2n] = db(n)\cdot0^n10^{n-2}11$ by Remark \ref{db facts}. The context used to predict this $1$ will always be a suffix of the  context $010^{n-2}1$. This context exists as we have that  $\occ{010^{n-2}1}{S_n[0.. 2^n + 2n]} = 2$.

\begin{claim}
The context $010^{n-2}1$ deterministically predicts the $1$.
\end{claim}

First we show that $\occ{010^{n-2}1}{S \upharpoonright S_{n-1}} = 0$.
The only place the string $10^{n-2}$ occurs is in $S_{n-1}$ where it would be preceded by $1^{n-2},$ in $S_{n-2}$ where it would be preceded by $1^{n-3}$ and not a $0$, or along a straddle between two prior $S_i$'s for $i \leq n-1$, but again, it would be preceded by a $1$, and not a $0$. Hence, $010^{n-2}1$ first appears in $S_n[0 .. 2^n-1] = db(n)$ where it is followed by a $1$, and so is deterministic. This established the claim.

As the context $010^{n-2}1$ is deterministic, all extensions of this context (those of the form $010^{n-2}1y$ for the appropriate $y \in \fbins$) that are built while reading $S_n$, must be deterministic also. They remain deterministic throughout the reading of $S_n$ since due to the construction of $S_n$, any substring of $S_n$ of length at least $n$ is always followed by the same bit. Thus, every bit not in the \emph{bad zone} of $S_n$ is predicted by a deterministic context which is a suffix of an extension of the deterministic context $010^{n-2}1$.
 
\end{proof}

For $n$-even but not of the form $2^j$ for some $j$, Lemma \ref{Out bad Zone} does not hold as while most bits are predicted by deterministic contexts, the shifts of the \emph{de Bruijn} strings in the construction of $S_n$ between blocks $B_{n,0}$ and $B_{n,1}$ mean that some contexts which may have originally been deterministic in $B_{n,0}$, soon see the opposite bit due to these shifts at the start of $B_{n,1}$.

For instance, consider the string $1^60^5$. We have that $\occ{1^60^5}{S \upharpoonright S_5} = 0$ as $1^6$ is not contained in any \emph{de Bruijn} string of order less than $6$. However it does occur in $S_6$ multiple times. The first two times it occurs it sees a $0$ (as $db(6)[2^6 - 7 .. 2^6 -1]\cdot db(6)[0 .. 5] = 1^60^6)$ by Remark \ref{db facts}). However the third time it sees a $1$ due to the shift in $B_{6,1}$ (as $db(6)[2^6 - 7 .. 2^6 -1]\cdot db_1(6)[0 .. 5] = 1^60^51$). Hence, $1^60^5$ is no longer deterministic.

We first prove the following result which bounds the number of bits each block $S_n$ contributes to the encoding. In the following, $|\PPMun(S_n | S\upharpoonright S_{n-1})|$ represents the number of bits contributed to the output by the $\PPMun$ encoder on $S_n$ if it has already processed $S\upharpoonright S_{n-1}$.

\begin{theorem}
\label{upper bound}
For almost every $n$, $$|\PPMun(S_n \, | \, S \upharpoonright S_{n-1})| \leq (2^n + 2n + n^2)\log(n^5) + \log((n-1)^n(n)^{2^{n+1}}).$$
\end{theorem}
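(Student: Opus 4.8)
The plan is to partition $S_n$ into two groups of bit positions: a set of \emph{bad bits}, consisting of the \emph{bad zone} (the first $2^n+2n$ bits) together with, when $n$ is even but not a power of $2$, the further bits lying near the $2^s-1\le n-1$ block boundaries $B_{n,i}\mid B_{n,i+1}$ where the cyclic shifts spoil determinism; and the set of \emph{good bits}, i.e.\ everything else. I would first argue that the number of bad bits is at most $2^n+2n+n^2$ (the boundary region around each of the $\le n-1$ shifts has size $O(n)$ because only length-$\le n$ contexts can straddle a shift), so by Lemma~\ref{bad bit comp} and Corollary~\ref{bad zone comp} the bad bits contribute at most $(2^n+2n+n^2)\log(n^5)$ bits. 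It then remains to bound the good-bit contribution by $\log((n-1)^n\, n^{2^{n+1}})$.

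For the good bits I would first pin down the deterministic structure. Once the first $2^n+n$ bits of $S_n$ have been read, Lemma~\ref{context building} together with the de Bruijn property makes every length-$n$ context present and deterministic, while (as observed after Lemma~\ref{context building}) no context of length $\le n-1$ is deterministic inside $S_n$. Hence, from the end of the bad zone onwards, every good bit is predicted by a length-$n$ deterministic context, and by Lemma~\ref{Out bad Zone} — extended to even $n$ away from the block boundaries by the same argument, since every length-$n$ window of a single $db_i(n)$ has a unique successor — that prediction is correct. So a good bit handled by a context $x$ whose current count is $k$ costs exactly $-\log\frac{k}{k+1}=\log\frac{k+1}{k}$ bits.

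Next comes a telescoping estimate. Fix a length-$n$ context $x$. Each time $x$ is the context actually used its count strictly increases, so the unit intervals $(k,k+1]$ arising at successive uses of $x$ are pairwise disjoint subintervals of $[1,c_x+1]$, where $c_x$ is the final count of $x$ for its predicted bit; since $\log$ is increasing, the total contribution of $x$ is at most $\log(c_x+1)\le \log(\occ{x}{S\upharpoonright S_n}+1)$. Summing over the at most $2^n$ length-$n$ contexts and using $\sum_{|x|=n}\occ{x}{S\upharpoonright S_n}\le |S\upharpoonright S_n|=\sum_{j=1}^{n} j2^j\le n2^{n+1}$, concavity of $\log$ (Jensen) bounds the good-bit cost by $2^n\log\!\left(2^{-n}\sum_{j\le n} j2^j+1\right)=2^n\log\big(O(n)\big)\le 2^{n+1}\log n$ for $n$ large; the spare factor $(n-1)^n$ leaves room to charge $\le\log(n-1)$ bits each to the $O(n)$ boundary-straddling contexts of length just above $n$ that are briefly used around block boundaries when $n$ is even. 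Adding the $\le 1$ bit of arithmetic-coding overhead and collecting terms gives the claimed bound.

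The main obstacle is the bookkeeping for $n$ even but not a power of $2$: Lemma~\ref{Out bad Zone} does not cover this case, so I must check carefully that the cyclic shifts between consecutive blocks $B_{n,i}$ spoil determinism only inside a region of total size $O(n^2)$, and that outside that region the predicting contexts are again length-$n$ deterministic contexts whose counts still obey the global occurrence bound used in the Jensen step. The remaining ingredients — the telescoping, the count of occurrences of a fixed string inside repeated de Bruijn strings, and the arithmetic-coding overhead — are routine.
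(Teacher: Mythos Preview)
Your overall decomposition into bad and good bits is the same as the paper's, and your final Jensen estimate would indeed land inside the stated bound. But the heart of your argument --- that after the bad zone ``every good bit is predicted by a length-$n$ deterministic context'' --- is not established, and in fact it fails. Lemma~\ref{context building} only tells you that all length-$n$ contexts \emph{exist}; it says nothing about determinism. The discussion preceding Lemma~\ref{bad bit comp} gives an explicit counterexample: for $n=7$ the context $0^7$ is non-deterministic after the bad zone, because the straddle of $S_4$ and $S_5$ already produced both $0^70$ and $0^71$. More generally, any length-$n$ string that occurs somewhere in $S\upharpoonright S_{n-1}$ followed by the bit opposite to its $db(n)$-successor will be non-deterministic for the remainder of $S_n$, and there is no reason this set is small; each $S_j$ with $j<n$ already contributes up to $2^j$ distinct length-$n$ windows, so the total can be of order $2^n$. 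At every position whose length-$n$ suffix is one of these contaminated contexts, $\PPMun$ will use a strictly longer context, and your per-context telescope over the $2^n$ length-$n$ strings simply does not see those bits. Since the contaminated contexts recur in every copy of $db(n)$, this is not an $O(n^2)$ boundary effect you can push into the bad-bit budget.

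The paper avoids this problem by a different mechanism: rather than summing over all length-$n$ contexts, it fixes one anchor context $010^{n-2}1$ (of length $n+1$) and proves directly that $\occ{010^{n-2}1}{S\upharpoonright S_{n-1}}=0$, so this context is guaranteed deterministic with no contamination from earlier zones. Every good bit is then charged to a context that is (a suffix of) an extension of this anchor, and because each such context occurs exactly once per copy of $db(n)$, after $k$ copies its count is exactly $k$. This gives the round-by-round product $\prod_{k}(k/(k+1))^{2^n-n-1}$ rather than your per-context telescope plus Jensen. To repair your approach you would need either to bound the number of non-deterministic length-$n$ contexts (and the number of good bits they govern) by $O(n^2)$, or to extend your telescoping to the variable-length contexts $\PPMun$ actually uses and redo the Jensen step over that larger family; neither is immediate.
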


\begin{proof}[Proof: ]
By Lemma \ref{Out bad Zone}, every bit outside the \emph{bad zone} is predicted correctly by a deterministic context when $n$ is odd or when $n$ is a power or $2$. This is not true for the remaining $n$ as mentioned in the discussion preceding this theorem. As such, the output contributed by the case where $n$
 is even but not a power of $2$ acts as an upper bound for all $n$.
 
In this case, $n = 2^st$, for $s,t, \in \N,$ where $t$ is odd. Recall that $S_n = B_0 \cdot B_1  \cdots B_{2^s - 1}$ where $B_i = (db_{i}^t(n))$. Let $b_n = 2^s$, the number of blocks in $S_n$. Unlike in the other two cases where all contexts used to encode remained deterministic throughout the encoding of $S_n$ after the first $2^n + 2n$ bits are processed by Lemma \ref{Out bad Zone}, in this case some contexts do not due to the shifts that occur within each block. If there are $b_n$ blocks, there are $b_n-1$ shifts. However, we can pinpoint which bits are predicted by deterministic contexts.

After processing the \emph{bad zone}, a $1$ is deterministically correctly predicted by the context $010^{n-2}1$ with probability at least $\frac{1}{2}$. This is because $\occ{010^{n-2}1}{S \upharpoonright S_{n-1}} = 0$ since the only place the string $10^{n-2}$ occurs is in $S_{n-1}$ where it would be preceded by $1^{n-2}$ and not a $0$, in $S_{n-2}$ where it would be preceded by $1^{n-3}$ and not a $0$, or along a straddle between two prior $S_i$'s for $i \leq n-1$, where it would be preceded by a $1$ and not a $0$. Hence, $010^{n-2}1$ first appears in $S_n[0 .. 2^n-1] = db(n)$ where it is followed by a $1$ (by Remark \ref{db facts}), and so is deterministic.

Following this, the next $2^n - n - 2$ bits will also be predicted by a deterministic context successfully throughout the process. This is because these contexts are suffixes of extensions of $010^{n-2}1$ and see the same bits within $S_n$. The next time $010^{n-2}1$ is seen it predicts a $1$ with probability at least $\frac{2}{3}$ and so on. When $010^{n-2}1$ is seen for the $n^{\thh}$ time, there are only $2^n - 2n + b_n-1$ bits left to encode as the encoder has \emph{fallen behind} by $b_n-1$ bits due to the $b_n-1$ shifts that occur. These bits are encoded with probability at least $\frac{n-1}{n}$. Excluding the \emph{bad zone} we have accounted for $$(2^n - n - 1)(n-2) + (2^n - 2n + b_n -1) = 2^nn - 2^n - n^2 - n + 1 + b_n$$ bits. These are encoded in 
\begin{align*}
    -\log(((\frac{1}{2})^{2^n - n - 1}(\frac{2}{3})^{2^n - n - 1}&\cdots (\frac{n-2}{n-1})^{2^n - n - 1})(\frac{n-1}{n})^{2^n - 2n + b_n - 1}) \notag\\
& \leq \log((n-1)^n(n)^{2^n + 2^n}) \tag{as $b_n < 2^n$ }  \\
& = \log((n-1)^n(n)^{2^{n+1}}) \tag{\dag}
\end{align*}
bits via arithmetic encoding.

Things differ with other contexts used as they may be impacted by the shifts that occur between blocks as discussed previously. For simplicity it is assumed all bits not accounted for so far contribute the worst case number of bits possible to the encoding. There are 
$n^2 - n - 1 - b_n$ such bits.

Then by \dag, Lemma \ref{bad bit comp} and Corollary \ref{bad zone comp}, we have that $$|\PPMun(S_n \,|\, S\upharpoonright S_{n-1})| \leq (2^n + 2n + n^2)(\log(n^5)) + \log((n-1)^n(n)^{2^{n+1}}).$$  
\end{proof}

We now prove the main theorem.

\begin{theorem}
\label{main thm}
$R_{\PPMun}(S) = 0.$
\end{theorem}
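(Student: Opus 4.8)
The plan is to derive $R_{\PPMun}(S) = 0$ from the per-block bound in Theorem~\ref{upper bound} by a telescoping/averaging argument over prefixes of $S$. Since $R_{\PPMun}(S) = \limsup_{m\to\infty} |\PPMun(S\upharpoonright m)|/m$, it suffices to show that this ratio tends to $0$ along all $m$, and the natural way is to first control it at the ``block boundaries'' $N_k = |S_1 S_2 \cdots S_k| = \sum_{n=1}^{k} n 2^n$, and then argue that within a single block $S_{k+1}$ the ratio cannot jump up by more than a vanishing amount.

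First I would bound $|\PPMun(S\upharpoonright N_k)|$. Writing the encoding as the concatenation of the contributions of each $S_n$ given $S\upharpoonright S_{n-1}$ (plus a bounded overhead from arithmetic coding), Theorem~\ref{upper bound} gives, for almost every $n$,
\begin{equation*}
|\PPMun(S_n \mid S\upharpoonright S_{n-1})| \leq (2^n + 2n + n^2)\log(n^5) + \log\bigl((n-1)^n (n)^{2^{n+1}}\bigr).
\end{equation*}
The dominant term on the right is of order $2^n \log n$ (from $(2^n+\cdots)\log(n^5)$ and from $2^{n+1}\log n$ in the second logarithm), whereas $|S_n| = n 2^n$. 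So $|\PPMun(S_n \mid S\upharpoonright S_{n-1})|/|S_n| = O(\log n / n) \to 0$. Summing, $|\PPMun(S\upharpoonright N_k)| \leq C + \sum_{n=1}^{k} O(2^n \log n)$, and since $N_k = \sum_{n=1}^k n 2^n \sim k 2^{k+1}$ is dominated by its last term (as is the numerator sum, which is $O(2^k \log k)$), we get $|\PPMun(S\upharpoonright N_k)|/N_k = O(\log k / k) \to 0$.

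Next I would handle a general $m$ with $N_{k} \leq m < N_{k+1}$, so that $S\upharpoonright m$ consists of $S\upharpoonright S_k$ followed by a prefix of $S_{k+1}$ of length $m - N_k \leq |S_{k+1}| = (k+1)2^{k+1}$. The encoder's output on $S\upharpoonright m$ is at most $|\PPMun(S\upharpoonright N_k)|$ plus the contribution of that partial block. For the partial block I would use the same ingredients as Theorem~\ref{upper bound}: the bad zone contributes at most $(2^{k+1} + 2(k+1))\log((k+1)^5)$ bits by Corollary~\ref{bad zone comp}, and every remaining bit contributes at most $\log((k+1)^5)$ bits by Lemma~\ref{bad bit comp} (this crude bound is enough—we do not even need the deterministic-context savings here), so the whole partial block contributes at most $(k+1)2^{k+1}\log((k+1)^5) = O(2^k \log k)$ bits. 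Hence $|\PPMun(S\upharpoonright m)| \leq |\PPMun(S\upharpoonright N_k)| + O(2^k\log k) = O(2^k \log k)$, while $m \geq N_k \sim k 2^{k+1}$, giving $|\PPMun(S\upharpoonright m)|/m = O(\log k / k) \to 0$ as $m\to\infty$ (equivalently $k \to\infty$). Therefore the $\limsup$ is $0$, and since compression ratios are non-negative, $R_{\PPMun}(S) = 0$.

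The only mildly delicate points—and the ones I would be most careful about—are: (i) justifying that the PPM encoder's output on a prefix really does decompose (up to $O(1)$ overhead) as the sum of the per-$S_n$ contributions, which follows from the arithmetic-coding description in Section~\ref{ppm section} where each bit's contribution is additive in $-\log(\text{probability})$ and only a single final rounding of at most one bit is incurred; and (ii) the asymptotic claim that both $\sum_{n\le k} n2^n$ and $\sum_{n\le k} 2^n\log n$ are dominated (up to constants) by their last terms, so that dividing numerator by denominator leaves only the $\Theta(\log k / k)$ factor. Both are routine, so I expect no real obstacle; the substantive work was already done in Theorem~\ref{upper bound}, and this theorem is essentially its corollary via a prefix-averaging argument.
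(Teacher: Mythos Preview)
Your overall strategy---sum the per-block bounds from Theorem~\ref{upper bound}, then handle a general prefix by controlling the partial block---is the right one, and it is essentially what the paper does. But there is a real gap in your partial-block estimate.

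You write that the partial block of $S_{k+1}$ contributes at most $(k+1)2^{k+1}\log((k+1)^5) = O(2^k\log k)$ bits. That equality is off by a factor of $k$: in fact $(k+1)2^{k+1}\log((k+1)^5) = 5(k+1)\log(k+1)\,2^{k+1} = \Theta(k\,2^k\log k)$. Dividing by $m \geq N_k \sim k\,2^{k+1}$ then yields a ratio of order $\log k$, which diverges. So contrary to your parenthetical remark, the crude per-bit bound of Lemma~\ref{bad bit comp} is \emph{not} enough on its own for the partial block; you genuinely need the deterministic-context savings there as well.

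The simplest repair is to observe that arithmetic-coding contributions are nonnegative, so the contribution of any prefix of $S_{k+1}$ is at most the contribution of all of $S_{k+1}$, which by Theorem~\ref{upper bound} is $O(2^{k}\log k)$. With this replacement your argument goes through verbatim and gives $|\PPMun(S\upharpoonright m)|/m = O(\log k/k)\to 0$. The paper takes a slightly different shortcut: it argues that the worst ratio along a block is attained when the prefix ends exactly at the end of the bad zone of $S_n$ (since past the bad zone each bit contributes strictly less than one bit while the denominator keeps growing), and then evaluates the ratio only at those prefixes. Either route works, but your ``crude bound suffices'' claim does not.
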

\begin{proof}[Proof: ]
Note that the worst compression of $S$ is achieved if the input ends with a complete bad zone, i.e. for a prefix of the form $S \upharpoonright m = S_1\ldots S_{n-1}S_{n}[0..2^n + 2n -1]$ for some $n$.

Let $S \upharpoonright m$ be such a prefix and let $k$ be such that Theorem \ref{upper bound} holds for all zones $S_i$ with $i \geq k$. The prefix $S_1\ldots S_{k-1}$ will always be encoded in O(1) bits. This gives

\begin{align*}
    \limsup\limits_{m \to \infty} \frac{|\PPMun(S \upharpoonright m)|}{m} & \leq \lim\limits_{n \to \infty}\Bigg (\frac{\sum\limits_{j = k}^{n-1}(2^j + 2j + j^2)\log(j^5)}{|S_1 \ldots S_{n-1}| + (2^{n} + 2n)} \Bigg )  \tag{by Thm \ref{upper bound}} \\
    &+ \lim\limits_{n \to \infty}\Bigg (\frac{(2^{n} + 2n)\log(n^5) + O(1)}{|S_1 \ldots S_{n-1}| + (2^{n} + 2n)} \Bigg )\\
    &= 0.
\end{align*}

As the overhead contributes at most one bit, we have that $R_{\PPMun(S)} = 0.$
 
\end{proof}
As $S$ is a normal sequence we have the following result.

\begin{corollary}
\label{normal S}
There exists a normal sequence $S$ such that $R_{\PPMun}(S) = 0.$
\end{corollary}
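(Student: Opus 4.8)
The plan is to deduce Corollary~\ref{normal S} directly from Theorem~\ref{main thm} together with Lemma~\ref{Enum Proof}, so the only work is to certify that the specific sequence $S$ constructed in Section~\ref{main sec} is normal. Theorem~\ref{main thm} already gives $R_{\PPMun}(S)=0$ for that very $S$, so the corollary will follow the moment normality is in hand; no new compression analysis is needed.

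First I would recall that $S=S_1S_2S_3\cdots$ where, by Lemma~\ref{Enum Proof}, for every $n$ and every $w\in\bin^n$ we have $\occb{w}{S_n}=1$, and $|S_n|=n2^n$. Thus $S$ is an enumeration of all strings in the sense defined in the Preliminaries. The plan is then to invoke the standard fact (attributed in the introduction to \cite{b.lempel-ziv}, via the chain ``enumeration of all strings $\Rightarrow$ not LZ-compressible $\Rightarrow$ normal''; alternatively one can argue normality directly) that any enumeration of all binary strings in order of length is normal. For a self-contained argument I would fix $w\in\bin^k$ and a large prefix $S\upharpoonright m$, locate the index $n$ with $|S_1\cdots S_{n-1}|\le m<|S_1\cdots S_n|$, and count occurrences of $w$ in three pieces: the completed blocks $S_1,\dots,S_{n-1}$, the partially read block $S_n$, and the small boundary/straddle corrections between consecutive $S_i$. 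For each completed $S_j$ with $j\ge k$, since $S_j$ lists every string of length $j$ exactly once as a $j$-aligned block, a counting argument gives $\occ{w}{S_j}=2^{j-k}\cdot(\text{const})+O(1)$ more precisely $\occ{w}{S_j}$ differs from $|S_j|\,2^{-k}=j2^{j-k}$ by at most $O(j)$, absorbing alignment offsets and the at most $k-1$ extra/missing occurrences at block seams. Summing over $j<n$ and dividing by $m$, the dominant term is $\sum_{j} j2^{j-k}\big/\sum_j j2^j\to 2^{-k}$, while the error terms $\sum_j O(j)$ and the $O(n)$ straddle corrections and the at most $|S_n|=n2^n$ contribution of the incomplete block are all $o(m)$ because $m\ge|S_1\cdots S_{n-1}|=\Theta(n2^n)$ dominates them. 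Hence $\occ{w}{S\upharpoonright m}/m\to 2^{-k}$, which is exactly normality.

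I expect the main obstacle to be the bookkeeping at the block and straddle boundaries: $\occb{w}{S_n}=1$ counts only $|w|$-aligned occurrences within a single $S_n$, whereas normality requires counting \emph{all} occurrences, including those that cross the junction between $S_{j}$ and $S_{j+1}$ or that sit at non-aligned positions inside $S_n$. The fix is quantitative rather than conceptual: a substring of length $k$ can start at only finitely many ($\le k$) residue classes, and there are only $O(n)$ seams up to length $m$, so these contribute $O(nk)=o(m)$; and within a single $S_j$ the count of $w$ at arbitrary offsets is within $O(j)$ of the aligned count times $k$ by a routine shifting argument. Once these error bounds are in place the limit computation is immediate. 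Finally, combining the established normality of $S$ with Theorem~\ref{main thm} yields Corollary~\ref{normal S}.
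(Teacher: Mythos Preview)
Your high-level plan is exactly what the paper does: the corollary is immediate from Theorem~\ref{main thm} once $S$ is known to be normal, and the paper obtains normality via the very chain you name first (enumeration of all strings $\Rightarrow$ not LZ-compressible $\Rightarrow$ not FST-compressible $\Rightarrow$ normal, citing \cite{b.lempel-ziv} together with \cite{DBLP:journals/acta/SchnorrS72,b.lutz.finite-state.dimension}). The paper gives no further argument for the corollary itself.

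Your alternative self-contained counting argument, however, has a real gap. You write that ``the at most $|S_n|=n2^n$ contribution of the incomplete block'' is $o(m)$, but $|S_1\cdots S_{n-1}|=\sum_{j<n}j2^j=(n-2)2^n+2$, so $|S_n|=n2^n$ is of the \emph{same} order as $m$, not $o(m)$. A prefix ending partway through $S_n$ can have up to roughly two thirds of its length inside $S_n$, so that block cannot simply be discarded. For the particular $S$ constructed in the paper the repair is easy: any prefix of $S_n$ is a prefix of a concatenation of cyclic shifts of $db(n)$, each full copy contributing exactly $2^{n-k}$ occurrences of $w$ (cyclically), so the partial-$S_n$ count deviates from $\ell\cdot 2^{-k}$ by only $O(2^n)=o(m)$. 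For a \emph{generic} enumeration of all strings one instead needs a concentration argument on the number of occurrences of $w$ in length-$n$ strings; the crude size bound you wrote does not suffice. Relatedly, your per-$S_j$ error bound of $O(j)$ is too optimistic for a generic enumeration---there are $2^j-1$ internal seams between consecutive length-$j$ blocks, giving an $O(k\,2^j)$ correction---though this still sums to $O(2^n)=o(m)$, and for the de~Bruijn-based $S_j$ used here the stronger $O(j)$ bound does in fact hold.
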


\subsection{Comparison of $\PPMk{k}$ and $\PPMun$}
\label{ppmk sec}

The following theorem demonstrates that for all $k \in  \N$, $\PPMk{k}$ achieves a best-case compression ratio of at least $\frac{1}{2}$ on $S$. Suppose you are examining $\PPMk{k}$. The idea is that each context of length $k$ predicts the same number of $0$s and $1$s in each $S_n$ zone for $n \geq k$. For $x \in \bin^k, n \geq k$, suppose $\occ{x}{S_n}=t$. The least amount any bit can contribute in $S_n$ is if the first $\frac{t}{2}$ times $x$ is seen it sees a $0$ and the final $\frac{t}{2}$ times it is seen it sees a $1$ (or vice versa). The $\frac{t}{2}^{\thh}$ $0$ (or $1$) contributes the least amount of bits, and if this amount is used as a lower bound for every bit in $S_n$, this gives the lower bound of $\frac{1}{2}$.  For each $k \in \N$ we use $\PPMk{k}(x)$ to denote the compression of $x \in \fbins$ when the max context length is bounded to be k.

\begin{theorem}
\label{ppmk comp S}
There exists a sequence $S$ such that $R_{\PPMun}(S) = 0$ but for all $k \in \N$, $\rho_{\PPMk{k}}(S) \geq \frac{1}{2}.$
\end{theorem}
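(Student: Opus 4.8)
The plan is to fix an arbitrary $k \in \N$ and show that for every prefix of $S$, the $\PPMk{k}$ encoder outputs at least roughly half as many bits as the prefix length, so that $\rho_{\PPMk{k}}(S) = \liminf_n |\PPMk{k}(S\upharpoonright n)|/n \geq 1/2$. Since $R_{\PPMun}(S) = 0$ is already established (Theorem \ref{main thm}), only the lower bound on $\PPMk{k}$ needs proof. First I would restrict attention to the zones $S_n$ with $n \geq k$: the prefix $S_1 \cdots S_{k-1}$ has bounded length so contributes nothing in the limit, and for $n \geq k$ every relevant context the model ever uses to predict a bit of $S_n$ has length at most $k \leq n$. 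The key combinatorial fact is that for $n \ge k$ and any $x \in \bin^k$, the number of occurrences $\occ{x}{S_n}$ equals the number of length-$n$ strings in whose enumeration $x$ appears as a length-$k$ substring, and crucially, among these occurrences $x$ is followed by a $0$ exactly half the time and by a $1$ exactly half the time. This should follow from Lemma \ref{Enum Proof}: $S_n$ enumerates all of $\bin^n$, and extending any given length-$k$ block $x$ by one bit splits its occurrences evenly between the two continuations (each of $x0$ and $x1$, being length $k+1 \le n$, occurs equally often as a block-structure/substring count in the enumeration $S_n$). I would make the bookkeeping precise using $\occ{}{}$ rather than $\occb{}{}$, accounting for the $O(n)$ boundary overlaps between consecutive blocks and between $S_{n-1}$ and $S_n$, which are lower-order.

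Next, for a single fixed context $x \in \bin^k$ seen $t = \occ{x}{S_n}$ times during the processing of $S_n$, I would compute the minimum possible total number of output bits attributable to the bits of $S_n$ predicted (at the top of the escape chain or after escapes) from $x$. Because the model is adaptive and uses Method C, when $x$ has seen $c_0$ zeros and $c_1$ ones so far, the next observed bit is encoded in at most $-\log$ of its current frequency probability — but for a \emph{lower} bound on output length we want the arrangement of $0$s and $1$s that \emph{minimises} $-\sum \log(\text{prob})$, i.e. the arrangement that keeps probabilities as close to $1$ as possible for as long as possible. Since exactly $t/2$ of each bit must occur, the telescoping product of frequencies is minimised (probabilities maximised) by grouping: all $t/2$ zeros first, then all $t/2$ ones (or vice versa), and one can check that the single step where the $(t/2)$-th zero is followed by the first one — or more simply, summing the whole telescoped contribution — forces a total of at least something like $\log\binom{t}{t/2} \approx t - O(\log t)$ bits, or at minimum that the \emph{per-bit} average contribution is bounded below. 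Actually the cleanest route matches the sketch in the paper: identify the single cheapest bit (the $(t/2)$-th occurrence of whichever symbol), note its probability is at most $\frac{t/2}{t/2 + 1} \le$ something bounded away from $1$ only by $O(1/t)$, so that bit alone contributes $\Theta(1/t)$ — that is too weak. The right statement is that summing $-\log$ over all $t$ occurrences of $x$, even in the best arrangement, gives $\sum_{j=1}^{t/2} \log\frac{j + \text{(escape/other)}}{\cdots}$-type expression which telescopes to at least $\tfrac12 t \cdot (1 - o(1))$ bits across the whole zone; I would carry this out so that, summed over all $2^k$ contexts $x$ and using $\sum_x \occ{x}{S_n} = |S_n| + O(n) = n2^n + O(n)$, the total output on $S_n$ is at least $\tfrac12 |S_n|(1 - o(1))$.

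Then I would assemble the ratio: for a prefix ending anywhere inside or at the end of $S_n$, the numerator is at least $\tfrac12(1-o(1))$ times the portion of that prefix lying in zones $S_k, \dots, S_n$, and since $|S_1 \cdots S_{n-1}| = o(|S_n|)$ the prefixes ending at zone boundaries dominate the $\liminf$ behaviour; handling a prefix that ends partway through $S_n$ is the same estimate applied to the processed portion. Taking $\liminf$ over all $n$ gives $\rho_{\PPMk{k}}(S) \geq \tfrac12$, and since $k$ was arbitrary the theorem follows (combined with Corollary \ref{normal S} for the existence of such normal $S$).

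I expect the main obstacle to be the second step: proving the lower bound on output length for a single adaptive context with an even $0/1$ split in a form strong enough to survive summation. One must be careful that (i) the adversarial/best-case arrangement of bits within a context is genuinely the ``all zeros then all ones'' grouping and that this really does minimise $-\sum\log p$ rather than some interleaving, (ii) the escape counts and the interaction between the order-$k$ context and shorter fallback contexts (a bit predicted from $x$ may actually be emitted after escaping to $x[1..k-1]$, etc.) are accounted for without double-counting, and (iii) the $O(n)$ boundary effects and the bounded initial segment are genuinely negligible against $n2^n$. A secondary subtlety is justifying the exact even split of continuations from Lemma \ref{Enum Proof}, since that lemma is phrased with block-occurrences $\occb{}{}$ whereas here we need ordinary substring occurrences $\occ{}{}$ of length-$(k+1)$ strings inside $S_n$; I would address this by a direct counting argument on the de Bruijn block structure of $S_n$ rather than invoking the lemma verbatim.
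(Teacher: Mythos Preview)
Your setup is right: fix $k$, restrict to zones $S_n$ with $n \geq k$, and use that each length-$k$ context $x$ is followed by $0$ and by $1$ equally often within each such zone. The gap is in your estimate of the cheapest single bit.

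You write the maximal prediction probability as $\frac{t/2}{t/2+1}$, treating the frequency counts as though they reset at the start of each zone $S_t$. They do not: the $\PPMk{k}$ model's counts are cumulative over the entire prefix read so far. When the encoder is inside $S_t$, the context $x$ already carries roughly $\sum_{n=k+1}^{t-1} n\cdot 2^{n-k-1}$ zeros and the \emph{same} number of ones from the earlier zones $S_{k+1},\dots,S_{t-1}$ (by the very even-split property you noted). Because $|S_n|$ roughly doubles with $n$, those accumulated balanced counts are of the same order as the $t\cdot 2^{t-k-1}$ new occurrences in $S_t$. Hence, even under the arrangement most favourable to the compressor (all $0$s then all $1$s in every zone), the highest probability any single bit can ever be assigned is
\[
\frac{(\text{prior }0\text{-count}) + t\cdot 2^{t-k-1}}{(\text{prior total}) + t\cdot 2^{t-k-1}} \;\longrightarrow\; \frac{2}{3},
\]
not $1 - O(1/t)$. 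This is exactly what the paper does: since no bit is ever predicted with probability exceeding $\tfrac{2}{3}+\varepsilon$, every bit contributes at least $-\log(\tfrac{2}{3}+\varepsilon) > \tfrac{1}{2}$ to the output, and $\rho_{\PPMk{k}}(S) \geq \tfrac{1}{2}$ follows immediately, with no summation or telescoping required.

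You actually had this route in hand (``identify the single cheapest bit'') and rejected it as ``too weak'' only because your probability estimate omitted the accumulated counts from earlier zones; once those are included the per-bit argument closes, and your unfinished summation approach becomes unnecessary. Your concern (ii) about escapes is also moot: once every length-$k$ context has seen both $0$ and $1$ (which happens by the end of $S_{k+1}$, since both $x0$ and $x1$ occur there for every $x \in \bin^k$), $\PPMk{k}$ never escapes from the order-$k$ table again, so every subsequent bit is encoded directly at order $k$ with the probability just bounded.
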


\begin{proof}[Proof: ]

Let $S$ be our sequence from Theorem \ref{main thm} that $R_{\PPMun}(S) = 0$. Let $k$ be the maximum context length for the bounded PPM compressor \PPMk{k}. Recall $S = S_1S_2\ldots$ where $S_i$ is an enumeration of all strings of length $i$.

Once $\PPMk{k}$ processes $S_1S_2\ldots S_k$, the model will contain a context for every string of length $k$. Let $x \in \{0,1\}^k$. Let $n_{x,b}$ be the number of instances that $x$ has been followed by $b$ in $S_1 \ldots S_{k-1}$, for $b \in \{0,1\}$. This means that if $x$ is next followed by $0$, it will be predicted with probability $\frac{n_{x,0}}{n_{x,0}+n_{x,1}+2}$. Let $t > k$, and consider the substring $S_t'=S_tS_{t+1}[0\ldots k-1]$ of $S$. A context $x\in \{0,1\}^k$ will appear $t\cdot2^{t-k}$ times in $S_t'$, half the time followed by a $0$, and half the time followed by a $1$. The maximum compression of $S_t'$ that could be achieved is when each context prediction contributes as few bits as possible. The minimum amount that can possibly be contributed by any single prediction occurs if the first $t\cdot2^{t-k-1}$ times $x$ is seen it is always followed by the same bit $b$, and the remaining times by $\hat{b}$, that is $b$ flipped. This $t\cdot2^{{t-k-1}^{\thh}}$ instance of $x$ being followed by $b$ contributes the fewest amount of bits possible to the final encoding. Of course, this is a hypothetical scenario and does not actually occur in our $S$, it serves as a lower bound.

Suppose $b=0$. Then in this hypothetical scenario for $S_{t,0}'$, the probabilities that a $0$ is predicted is given by the sequence
\begin{align*}
S_{t,0}' =& \bigg\{\frac{n_{x,0} + \sum_{n=k+1}^{t-1}(n\cdot2^{n-k-1}) + j}{(n_{x,0}+n_{x,1}+2)+\sum_{n=k+1}^{t-1}(n\cdot2^{n-k}) + j}\bigg\}_{0\leq j < t\cdot2^{t-k}} \\
=&\bigg\{\frac{n_{x,0} + (1-k+t\cdot2^{t-k-1}-2^{t-k}) +j}{(n_{x,0}+n_{x,1}+2)+ (2-2k + t\cdot2^{t-k} - 2^{t-k+1}) + j}\bigg\}_{0\leq j < t\cdot2^{t-k}}.
\end{align*}

Note that 
$$
    \limsup\limits_{\substack{t \rightarrow \infty \\ 0 \leq j < t\cdot2^{t-k}}}S_{t,0}' = \frac{2}{3} \text{\, \, and} 
   \liminf\limits_{\substack{t \rightarrow \infty \\ 0 \leq j < t\cdot2^{t-k}}}S_{t,0}' = \frac{1}{2}.
$$

A sequence for $S_{t,1}$ can similarly be defined to get $$\limsup\limits_{\substack{t \rightarrow \infty \\ 0 \leq j < t\cdot2^{t-k}}}S_{t,1}' = \frac{1}{2} \text{\, \, and} 
    \liminf\limits_{\substack{t \rightarrow \infty \\ 0 \leq j < t\cdot2^{t-k}}}S_{t,1}' = \frac{1}{3}.
    $$
    
    Thus for small $\epsilon$, as $m$ gets large each new prediction contributes at least $-\log(\frac{2}{3}+\epsilon)$ bits. Hence, for $\delta > 0$, for almost every $m$ we have

$$|\PPMk{k}(S\upharpoonright m)| \geq m(1-\delta)(-\log(\frac{2}{3}+\epsilon)) \geq m(1-\delta)(-\log(\frac{7}{10})) \geq (1-\delta)\frac{m}{2}.$$
 
\end{proof}

The bound in the above theorem can of course be made much tighter, but it is sufficient to demonstrate a difference between $\PPMun$ and $\PPMk{k}$.

\section{Lempel-Ziv 78}

The Lempel-Ziv 78 (LZ) algorithm  \cite{b.lempel-ziv} is a lossless dictionary based compression algorithm. Given an input $x \in \fbins$, $\LZ$ parses $x$ into phrases $x = x_1x_2\ldots x_n$ such that each phrase $x_i$ is unique in the parsing, except for maybe the last phrase. Furthermore, for each phrase $x_i$, every prefix of $x_i$ also appears as a phrase in the parsing. That is, if $y$ is a prefix of $x_i$, then $y = x_j$ for some $j<i$. Each phrase is stored in $\LZ$'s dictionary. $\LZ$ encodes $x$ by encoding each phrase as a pointer to its dictionary containing the longest proper prefix of the phrase along with the final bit of the phrase. Specifically for each phrase $x_i$, $x_i = x_{l(i)}b_i$ for $l(i) < i$ and $b_i \in \{0,1\}.$ Then for $x = x_1x_2\ldots x_n$
$$\LZ(x) = c_{l(1)}b_1c_{l(2)}b_2\ldots c_{l(n)}b_n$$
where $c_i$ is a prefix free encoding of the pointer to the $i^{th}$ element of $\LZ$'s dictionary, and $x_0 = \lambda$.

Sequences that are enumerations of all strings are incompressible the LZ algorithm. As such, taking $S$ from Theorem \ref{main thm}, by Corollary \ref{normal S} and as $S$ is an enumeration of all strings, we have the following result.

\begin{theorem}
There exists a normal sequence $S$ such that \begin{enumerate}
    \item $R_{\PPMun}(S) = 0$,
    \item $\rho_{LZ}(S) = 1$.
\end{enumerate}
\label{Compare Theorem}
\end{theorem}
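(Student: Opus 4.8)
The plan is to handle the two items of the theorem separately: item~(1) and normality are essentially already in hand, and the real work is item~(2).

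\emph{Item (1) and normality.} Take $S=S_1S_2S_3\cdots$ to be the sequence constructed in Section~\ref{main sec}. Item~(1), $R_{\PPMun}(S)=0$, is exactly Theorem~\ref{main thm}. That $S$ is normal is Corollary~\ref{normal S}: by Lemma~\ref{Enum Proof} each $S_n$ is a block-enumeration of $\bin^n$, and since $|S_n|=n2^n$ this makes $S$ an enumeration of all strings in the sense of Section~2, which in particular forces normality.

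\emph{Item (2).} I would prove the more general statement that no enumeration of all strings is compressible by LZ78, so that $\rho_{\LZ}(S)=1$. Fix a prefix $x=S\upharpoonright m$ and let $c$ be the number of phrases in its LZ parsing. The phrases are pairwise distinct and $\LZ$ spends at least $\log_2 i-O(1)$ bits on the pointer of the $i$th phrase, so $|\LZ(x)|\ge\log_2(c!)+c\ge c\log_2 c-O(c)$; hence it suffices to show $c\log_2 c\ge(1-o(1))\,m$, i.e.\ that LZ is forced to use nearly the maximal possible number of phrases, equivalently that the average phrase length is $(1+o(1))\log_2 m$. The upper bound $c\log_2 c\le(1+o(1))m$ is automatic, since any $c$ distinct binary strings have total length at least $(1-o(1))c\log_2 c$ and the phrases partition $x$. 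For the matching lower bound I would exploit the construction of $S$: choose $n$ so that $S_1\cdots S_{n-1}$ is a prefix of $x$, so that $x$ contains every string of length at most $n$. The crucial feature of LZ78 is that it lengthens phrases only one bit at a time, whereas inside a block $S_k$ the only repeated substrings of length $\ge k$ come from the repeated (and cyclically shifted) copies of $db(k)$ out of which the sub-blocks $B_{k,i}$ are built---every window of $db(k)$ of length $k$ being unique. Consequently, after $S_1\cdots S_{k-1}$ has been read, each of the $k$ copies of an order-$k$ de Bruijn string occurring in $S_k$ contributes only on the order of $2^k/k$ new phrases; the dictionary never covers more than an $o(1)$ fraction of the $2^k$ windows of length $k$ present in $S_k$, typical matches stay below length $k$, and the average phrase length over $S_k$ remains $(1+o(1))k=(1+o(1))\log_2|S_k|$. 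Summing over $k\le n$ yields $c=(1-o(1))\,m/\log_2 m$, hence $|\LZ(x)|/m\to1$, and with the automatic upper bound this gives $\rho_{\LZ}(S)=1$. (This LZ-incompressibility of enumerations of all strings is the fact already invoked in the Introduction and can alternatively be quoted from the analysis in \cite{b.lempel-ziv}.)

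\emph{Main obstacle.} The delicate point is this last step: turning ``the de Bruijn repetitions cannot help LZ78 because it only extends phrases incrementally'' into a count of phrases with the sharp constant $1$, rather than merely a constant such as $\tfrac12$---the latter being what one gets from the crude bound that every phrase of $S\upharpoonright m$ has length at most $(2+o(1))\log_2 m$. I expect this to require a careful per-block accounting of how many phrases can exceed length $k$ within $S_k$, bounding them via the current dictionary size rather than via $|S_k|$, together with a separate (easy) treatment of the $O(n)$ phrases straddling the boundary between consecutive blocks $S_{k-1}$ and $S_k$. Everything else---item~(1), normality, the $c\log_2 c$ lower bound on $|\LZ(x)|$, and the trivial upper bound on $c$---is routine.
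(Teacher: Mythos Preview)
Your treatment of item~(1) and normality matches the paper exactly: both are immediate from Theorem~\ref{main thm} and Corollary~\ref{normal S} (via Lemma~\ref{Enum Proof}).

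For item~(2), the paper does not give a proof at all. It simply invokes the known fact---already stated in the Introduction and attributed to the original Lempel--Ziv analysis \cite{b.lempel-ziv}---that any sequence which is an enumeration of all strings satisfies $\rho_{\LZ}(S)=1$, and observes that $S$ is such an enumeration. You explicitly note this option in your last parenthetical remark, so your proposal subsumes the paper's argument.

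Your additional direct sketch of the LZ lower bound therefore goes beyond what the paper does, and is not needed. As a sketch it is reasonable in spirit but, as you yourself flag, the per-block accounting is the real content and your heuristics are not yet tight: for instance, the claim that the dictionary covers only an $o(1)$ fraction of the $2^k$ length-$k$ windows after reading $S_1\cdots S_{k-1}$ and the first copy of $db(k)$ is not obviously true---the dictionary can already have order $2^k$ entries at that point, so a constant fraction is a priori possible. Turning the intuition ``LZ78 extends phrases one bit at a time, so the $k$-fold repetition of $db(k)$ cannot be exploited'' into the sharp constant $1$ requires the careful argument you allude to, and that argument is precisely what one would cite rather than reproduce. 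Since the paper is content to cite it, your proposal is in line with the paper once you exercise the citation option.
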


\section{Open Questions}
Does there exists a  sequence which acts as the opposite to Theorem \ref{Compare Theorem}?  Can the construction method for $S$ in Theorem \ref{main thm} be  generalised so that an infinite family of sequences satisfy the theorem? Bounded PPM and \PPMun gives rise to the possibility of developing a notion of Bennett's logical depth \cite{b:bennett88} based on the PPM algorithms. $S$ is an obvious candidate for a PPM-deep sequence, but how would properties such as the Slow Growth Law be defined in the PPM setting? Depth notions based on compressors and transducers have already been introduced in \cite{DBLP:conf/cie/DotyM07,jm2020difference}.
\bibliographystyle{plain}
\bibliography{main.bib}

\begin{thebibliography}{10}

\bibitem{DBLP:journals/jcss/BecherCH15}
Ver{\'{o}}nica Becher, Olivier Carton, and Pablo~Ariel Heiber.
\newblock Normality and automata.
\newblock {\em J. Comput. Syst. Sci.}, 81(8):1592--1613, 2015.

\bibitem{Bell:1990:TC:77753}
Timothy~C. Bell, John~G. Cleary, and Ian~H. Witten.
\newblock {\em Text Compression}.
\newblock Prentice-Hall, Inc., Upper Saddle River, NJ, USA, 1990.

\bibitem{b:bennett88}
C.~H. Bennett.
\newblock Logical depth and physical complexity.
\newblock {\em The Universal Turing Machine, A Half-Century Survey}, pages
  227--257, 1988.

\bibitem{debruijn1946combinatorial}
Nicolaas Govert~De Bruijn.
\newblock A combinatorial problem.
\newblock In {\em Proc. Koninklijke Nederlandse Academie van Wetenschappen},
  volume~49, pages 758--764, 1946.

\bibitem{DBLP:journals/iandc/CartonH15}
Olivier Carton and Pablo~Ariel Heiber.
\newblock Normality and two-way automata.
\newblock {\em Inf. Comput.}, 241:264--276, 2015.

\bibitem{PPMstar}
John~G. Cleary and W.~J. Teahan.
\newblock Unbounded length contexts for {PPM}.
\newblock {\em Comput. J.}, 40(2/3):67--75, 1997.

\bibitem{DBLP:journals/tcom/ClearyW84}
John~G. Cleary and Ian~H. Witten.
\newblock Data compression using adaptive coding and partial string matching.
\newblock {\em {IEEE} Trans. Communications}, 32(4):396--402, 1984.

\bibitem{b.lutz.finite-state.dimension}
J.~Dai, J.I. Lathrop, J.H. Lutz, and E.~Mayordomo.
\newblock Finite-state dimension.
\newblock {\em Theoretical Computer Science}, 310:1--33, 2004.

\bibitem{DBLP:conf/cie/DotyM07}
David Doty and Philippe Moser.
\newblock Feasible depth.
\newblock In S.~Barry Cooper, Benedikt L{\"o}we, and Andrea Sorbi, editors,
  {\em CiE}, volume 4497 of {\em Lecture Notes in Computer Science}, pages
  228--237. Springer, 2007.

\bibitem{borelNormal}
M.~\'{E}mile Borel.
\newblock Les probabilités dénombrables et leurs applications arithmétiques.
\newblock {\em Rendiconti del Circolo Matematico di Palermo}, 27(1):247--271,
  1909.

\bibitem{jm2020difference}
Liam Jordon and Philippe Moser.
\newblock On the difference between finite-state and pushdown depth.
\newblock In {\em International Conference on Current Trends in Theory and
  Practice of Informatics}, pages 187--198. Springer, 2020.

\bibitem{maliciousLZ}
J.~Lathrop and M.~Strauss.
\newblock A universal upper bound on the performance of the {L}empel-{Z}iv
  algorithm on maliciously-constructed data.
\newblock In {\em Proceedings of the Compression and Complexity of Sequences
  1997}, pages 123--135, 1997.

\bibitem{martin1934}
M.~H. Martin.
\newblock A problem in arrangements.
\newblock {\em Bull. Amer. Math. Soc.}, 40(12):859--864, 1934.

\bibitem{DBLP:journals/mst/MayordomoMP11}
Elvira Mayordomo, Philippe Moser, and Sylvain Perifel.
\newblock Polylog space compression, pushdown compression, and lempel-ziv are
  incomparable.
\newblock {\em Theory Comput. Syst.}, 48(4):731--766, 2011.

\bibitem{moffatMethodC}
Alistair Moffat.
\newblock Implementing the {PPM} data compression scheme.
\newblock {\em {IEEE} Trans. Communications}, 38(11):1917--1921, 1990.

\bibitem{DBLP:journals/acta/SchnorrS72}
Claus{-}Peter Schnorr and H.~Stimm.
\newblock Endliche automaten und zufallsfolgen.
\newblock {\em Acta Inf.}, 1:345--359, 1972.

\bibitem{WittenACEncoding}
Ian~H. Witten, Radford~M. Neal, and John~G. Cleary.
\newblock Arithmetic coding for data compression.
\newblock {\em Commun. ACM}, 30(6):520--540, June 1987.

\bibitem{b.lempel-ziv}
J.~Ziv and A.~Lempel.
\newblock Compression of individual sequences via variable-rate coding.
\newblock {\em IEEE Transaction on Information Theory}, pages 530--536, 1978.

\end{thebibliography}

\end{document}